%
%
%
%
%
%
%
%
\documentclass[smallextended,numbook,runningheads]{svjour3}     
%

%
\usepackage[sort,nocompress]{cite}
\smartqed  
\usepackage{graphicx}
\usepackage{amsmath}
\usepackage{epstopdf} 
\usepackage{mathptmx}

\usepackage{amsmath,amssymb,blkarray,color,enumerate,listings,mathtools,tabularx,url,xspace,array,amsfonts,xcolor}
\usepackage{multicol,multirow}        

\makeatletter
\def\rf#1{(\@rf#1,.)}
\def\@rf#1,{\ref{eq:#1}\@ifnextchar . {\@endrf}{, \@rf}}
\def\@endrf.{}
\makeatother

\usepackage{float}
\usepackage{hyperref}
\hypersetup{colorlinks=true,
  linkcolor=blue,
  citecolor=blue,
  filecolor=magenta,
  urlcolor=blue
  }

\usepackage{caption}
\usepackage[font=normalsize]{subcaption}

\newcommand\qaq{\quad\text{and}\quad}
\newcommand\dof{\text{{\sc dof}}\xspace}
\renewcommand{\~}[1]{\mathbf{#1}}
\newcommand{\setbg}[1]{\bigl\{\,#1\,\bigr\}}
\newcommand{\casemod}[2]{\left\{ \begin{array}{#1} #2 \end{array}\right. }
\newcommand{\pendL}{L}
\newcommand{\pendG}{g}

\newcommand{\Sigmeth}{$\Sigma$-method\xspace}
\newcommand{\mc}[3]{\multicolumn{#1}{#2}{#3}}

\newcommand{\s}[1]{\text{\scriptsize$#1$}}
\newcommand{\valSig}[2]{\val{#1}=#2}
\newcommand{\detJac}[2]{\det (#1)=#2}

\newcommand\offsetpair{(\~c;\~d)}

\newcommand{\sij}[2]{\sigma_{#1#2}}

\newcommand{\Sig}{\Sigma}
\newcommand{\Jac}{\~J}
\newcommand{\Jij}[2]{\Jac_{#1#2}}

\newcommand\ssth[1]{#1{\text{th}}}

\newcommand{\compSig}{\widetilde\Sigma}
\newcommand{\compJ}{\~{\widetilde J}}
\newcommand{\compJij}[2]{\~{\widetilde J}_{#1#2}}
\newcommand{\compsij}[2]{\widetilde\sigma_{#1#2}}

\newcommand\compoffvec{(\compc,\compd)}

\newcommand{\newsig}{\overline\sigma}
\newcommand{\newsij}[2]{\newsig_{#1#2}}

\newcommand{\newSig}{\overline\Sigma}
\newcommand{\newSigma}{\overline\Sigma}
\newcommand{\newJ}{\~{\overline J}}
\newcommand{\newJac}{\~{\overline J}}

\newcommand{\newT}{\overline T}
\newcommand{\newf}{\overline f}

\newcommand{\newc}{\overline c}
\newcommand{\newd}{\overline d}

\newcommand{\rnge}[2]{#1\,{:}\,#2}
\newcommand{\setrnge}[2]{\{#1,\ldots,#2\}}

\newcommand{\oneton}{\rnge{1}{n}}

\newcommand{\clsp}{\phantom{\ }}
\newcommand{\lam}{\lambda}

\newcommand{\daesa}{{\sc daesa}\xspace}

\newcommand{\daets}{{\sc daets}\xspace}

\newcommand{\indxk}{l}
\newcommand{\nzset}{L}
\newcommand{\eqsetI}{I}

\newcommand{\hoder}[2]{\sigma\left(#1,#2\right)}

\newcommand\eqsetIES{M}

\newcommand\nzsetES{J}

\newcommand{\jone}{j}
\newcommand{\jtwo}{r}
\newcommand\vecu{\~u}
\newcommand\vecv{\~v}

\newcommand{\LargeLess}{\parbox{12pt}{\Large $<$}}
\newcommand{\HugeLess}{\parbox{12pt}{\Large $<$}}
\newcommand{\HugeLe}{\parbox{12pt}{\Large $\le$}}
\newcommand{\HugeInf}{\parbox{12pt}{\Large $-\infty$}}

\newcommand{\val}[1]{\textnormal{Val}(#1)}
\newcommand{\set}[1]{\{\, #1 \,\}}
\newcommand{\pp}[2]{\frac{\partial #1}{\partial #2}}
\newcommand{\ppin}[2]{\partial #1/\partial #2}
\newcommand{\fracin}[2]{#1/#2}

\newcommand\nequiv{\not\equiv}

\definecolor{lgray}{rgb}{0.6,0.6,0.6}

\newcommand{\lgo}{\OK{0}}

\definecolor{palegray}{rgb}{0.7,0.7,0.7}
\newcommand{\OK}[1]{\colorbox{palegray}{\ensuremath{#1}}}

\newcommand{\rrf}[2]{(\ref{eq:#1}--\ref{eq:#2})}

\newcommand{\SSCref}[1]{\S\ref{ssc:#1}}

\newcommand{\SCref}[1]{\S\ref{sc:#1}}
\newcommand{\EXref}[1]{Example~\ref{ex:#1}}

\newcommand{\APref}[1]{\S\ref{ap:#1}}
\newcommand{\TBref}[1]{Table~\ref{tb:#1}}

\newcommand{\LEref}[1]{Lemma~\ref{le:#1}}
\newcommand{\FGref}[1]{Figure~\ref{fg:#1}}

\newcommand{\THref}[1]{Theorem~\ref{th:#1}}


\newcommand{\uiter}[1]{\~u}

\newcommand{\Iiter}[1]{\eqsetI}
\newcommand{\thiter}[1]{\LCconst}
\newcommand{\Siter}[1]{\Sig^#1}
\newcommand{\Jiter}[1]{\Jac^#1}

\newcommand{\Kiter}[1]{\nzset}

\newcommand{\kiter}[1]{\indxk}


\newcommand\as{\leftarrow}
\newcommand\phanbull{\phantom{^\bullet}}
\newcommand\LCconst{\underline{c}}
\newcommand\ESconst{\overline{c}}
\newcommand\underC{\LCconst}
\newcommand\overC{\ESconst}
\newcommand\newL{{\overline \nzset}}
\newcommand\newJes{{\overline \nzsetES}}

\newcommand\coker[1]{\textnormal{coker}(#1)}

\renewcommand\ker[1]{\textnormal{ker}(#1)}
\renewcommand\SCref[1]{Section~\ref{sc:#1}}
\renewcommand\newc{{\widetilde c}}
\renewcommand\newd{{\widetilde d}}
\newcommand\Jnol{\nzsetES\setminus\setbg\indxk}
\renewcommand{\Jij}[2]{J_{#1#2}}
\renewcommand{\compJij}[2]{\widetilde{J}_{#1#2}}

\renewcommand\compoffvec{(\widetilde{\~c};\widetilde{\~d})}
\newcommand\newSigBlk[2]{\newSig_{#1#2}}
\renewcommand{\APref}[1]{Appendix~\ref{ap:#1}}


%
\journalname{BIT}
\begin{document}

\title{Conversion Methods for Improving Structural Analysis of Differential-Algebraic Equation Systems
}

\titlerunning{Conversion Methods for Improving Structural Analysis of DAEs}        

\author{Guangning Tan         \and
Nedialko S. Nedialkov \and\\
John D. Pryce
}


\institute{G. Tan
\at School of Computational Science and Engineering, McMaster University, \at
1280 Main Street West, Hamilton, Ontario L8S 4K1, Canada,
\email{tang4@mcmaster.ca} \and
N. S. Nedialkov
\at Department of Computing and Software, McMaster University, \at
1280 Main Street West, , Hamilton, Ontario L8S 4K1, Canada,
\email{nedialk@mcmaster.ca} \and
J. D. Pryce
\at School of Mathematics, Cardiff University, \at Senghennydd Road, Cardiff CF24 4AG, Wales, UK.,
\email{prycejd1@cardiff.ac.uk}
}

\date{Received: date / Accepted: date}

\maketitle

\begin{abstract}
Differential-algebraic equation systems (DAEs) are generated routinely by simulation and modeling environments. Before a simulation starts and a numerical method is applied, some kind of structural analysis (SA) is used to determine which equations to be differentiated, and how many times. 
Both Pantelides's algorithm and Pryce's $\Sigma$-method are equivalent: if one of them finds correct structural information, the other does also. Nonsingularity of the Jacobian produced by SA indicates a success, which occurs on many problems of interest. However, these methods can fail on simple, solvable DAEs and give incorrect structural information including the index. This article investigates $\Sigma$-method's failures and presents two conversion methods for fixing them. Both methods convert a DAE on which the $\Sigma$-method fails to an equivalent problem on which this SA is more likely to succeed. 

\keywords{differential-algebraic equations \and structural analysis \and modeling\and symbolic computation}
\subclass{34A09\and 65L80\and 41A58 \and 68W30}
\end{abstract}


\section{Introduction.}\label{sc:intro}

Differential-algebraic equation systems (DAEs) arise from disciplines such as electrical circuits, chemical engineering, optimal control, and mechanical systems. 
To simulate the dynamic behaviour of such systems, a variety of algorithms are developed from building a mathematical model to producing a numerically solvable system of equations. 
In the modeling stage, components and modules are selected from  libraries and integrated into subsystems. 
Each having its own physical dynamics, these subsystems together can be further interconnected via interface or coupling formulas, see \cite{scholz2013combined} for example.
The result of this approach can be a large, sparse, and nonlinear DAE system, which is typically structured: the dependence between components is stronger within a subsystem, but is weaker between subsystems. Moreover, such a DAE may have a high index.

To solve numerically a DAE, usually derivatives of some equations need to be appended to the original DAE, and an augmented system is solved as a whole. With some index reduction methods \cite{kunkel2004index,Matt93a} or regularization techniques \cite{scholz2013combined,KunM06}, this enlarged system is reduced to a DAE of index 1 or a regularized DAE, respectively, so that a standard DAE numerical solution method can be applied. However, it is not easy to tell which equations are to be differentiated, and how many times exactly. If the numerical method is not chosen properly for a DAE of high index, then the integration can lead to instabilities and non-convergence of this method \cite{scholz2013combined}.

Hence it is desirable to understand the structure of a DAE before a simulation starts on it. As a preprocessing tool, some structural analysis (SA) algorithm is applied to determine the index, number of degrees of freedom (DOF), constraints, and which variables and derivatives need initial values. This preprocess helps give more insight into the underlying structure of a DAE and indicates how to carry out a numerical integration.

The widely used SA method of Pantelides's \cite{Pant88b} is an algorithm that requires graph theory for understanding and implementation. Pryce's \Sigmeth \cite{Pryce2001a} is equivalent to it, for they both produce the same structural index, when applied to first-order systems \cite[Theorem 5.8]{Pryce2001a}. 
This index is an upper bound for the differentiation index, and often they are the same \cite{Pryce2001a}. However, Rei\ss ig {\em et al.} show that the structural index can be arbitrarily high for a family of DAEs of differentiation index 1 \cite{Reissig1999a}. We show that some simple manipulation on equations can make the \Sigmeth report the correct (structural) index 1 on these DAEs \cite{tgn2015aCAS}.

The \Sigmeth can also work on high-order systems. The SA results can help decide how to apply an index reduction algorithm \cite{Matt93a,mckenzie2013AMMCS,Pryce2015c,McKenzie2015UDDs,Pryce2015DDsnew,McKenzie2016a,mckenzie2015a}, perform a regularization process \cite{scholz2013combined}, or design a solution scheme for a Taylor series method \cite{Barrio05b,Barrio06a,nedialkov2005solving,nedialkov2007solving,nedialkov2008solving, nedialkov2016a}.

Although the \Sigmeth succeeds on many problems of practical interest, it can fail---hence Pantelides's algorithm fails as well---on simple, solvable DAEs, producing an identically singular System Jacobian. 

In this article, we investigate the $\Sigma$-method's failures and present two conversion methods that reformulate such a DAE into an equivalent problem with the same solution. 
After each conversion, provided some conditions are satisfied, the value of the signature matrix is guaranteed to decrease. We conjecture that this decrease usually leads to a better formulation of a problem, so that the SA may produce a (generically) nonsingular System Jacobian and hence succeed.

The rest of this article is organized as follows. 
\SCref{SA} summarizes the \Sigmeth theory and the notation we use throughout this article. 
\SCref{fail} describes these SA's failures. 
\SCref{methods} introduces the conversion methods and illustrates them with simple examples.
\SCref{exam} presents two more illustrative examples.
\SCref{conclu} gives conclusions.

\section{Summary of the \Sigmeth.}\label{sc:SA}
We consider DAEs of the general form
\begin{align}\label{eq:maineq}
  f_i(\, t,\, \text{the $x_j$ and derivatives of them}\,) = 0, \quad i=\rnge{1}{n}\;,
\end{align}
where\footnote{The colon notation $\rnge{p}{q}$ for integers $p, q$ denotes either the unordered set or the enumerated list of integers $i$ with $p\le i\le q$, depending on context.} the $x_j(t)$, $j=\rnge{1}{n}$, are state variables that are functions of an independent variable $t$, usually regarded as time. 

We introduce notation that will be used later. For more details, see \cite{Pryce2001a, NedialkovPryce2012a, nedialkov2005solving}. Terms are set in {\sl slanted font} at their defining occurrence.

The \Sigmeth constructs for a DAE \rf{maineq} an $n\times n$ {\sl signature matrix} $\Sigma$, whose $(i,j)$ entry $\sij{i}{j}$ is either an integer $\ge 0$, order of the highest derivative to which variable $x_j$ occurs in equation $f_i$, or $-\infty$ if neither $x_j$ nor its derivatives\footnote{Throughout this article, ``derivatives of $x_j$'' include $x_j$ itself as its $0$th derivative: $x_j^{(l)}=x_j$ if $l=0$.} occur in $f_i$.

A {\sl highest-value transversal} (HVT) of $\Sigma$ is a set $T$ of $n$ positions $(i,j)$ with one entry in each row and each column, such that the sum of these entries is maximized. This sum is the {\sl value of $\Sigma$}, written $\val\Sigma$. If $\val\Sigma$ is finite, then the DAE is {\sl structurally well posed} (SWP); otherwise, $\val\Sigma=-\infty$ and the DAE is {\sl structurally ill posed} (SIP). In the SIP case, there exists no one-to-one correspondence between equations and variables.

We henceforth consider the SWP case.
Using a HVT, we find $2n$ integers $\~c=(c_1,\ldots,c_n)$ and $\~d=(d_1,\ldots,d_n)$ associated with the equations and variables of \rf{maineq}, respectively. These integers satisfy
\begin{equation}\label{eq:cidj}
c_i\ge 0\quad\text{for all $i$;}\qquad d_j-c_i\geq \sigma_{ij} \quad \text{for all $i,j$ with equality on a HVT\;.}
\end{equation}
We refer to such $\~c$ and $\~d$, written as a pair $\offsetpair$, as a {\sl valid offset pair}. 
It is not unique, but there exists a unique elementwise smallest solution $\offsetpair$ of \rf{cidj}, which we refer to as the {\sl canonical offset pair} \cite{Pryce2001a}.

Any valid  $\offsetpair$ can be used to prescribe a stage-by-stage {\sl solution scheme} for solving DAEs by a Taylor series method. The derivatives of the solution are computed in stages 
\begin{equation}\label{eq:stagek}
k=k_d,k_d+1,\ldots,0,1,\ldots, \quad\text{where $k_d=-\max_j{d_j}$}\;.
\end{equation}
At each stage $k$, we solve
\begin{equation}\label{eq:solvefi}
0 = f_i^{(c_i+k)} \qquad \text{for all $i$ such that $c_i+k\geq 0$}
\end{equation}
for derivatives
\begin{equation}\label{eq:forxj}
\phantom{0 =\,\, }x_j^{(d_j+k)} \qquad\text{for all $j$ such that $d_j+k\geq 0$}\;,
\end{equation}
using $x_j^{(< d_j+k)}$, $j=\rnge{1}{n}$, found in the previous stages.
Here $z^{(<r)}$ is a short notation for $z,z',\ldots,z^{(r-1)}$, and $z^{(\le r)}$ includes $z^{(<r)}$ and $z^{(r)}$.

If the solution scheme \rrf{stagek}{forxj} can be carried out for stages $k=\rnge{k_d}{0}$, and the derivatives $x_j^{(\le d_j)}$, $j=\oneton$, can be uniquely determined, then we say the solution scheme and the \Sigmeth\xspace {\sl succeed}. Otherwise they {\sl fail}, in the sense that the Jacobian used to solve \rf{solvefi} at some stage $k\in \rnge{k_d}{0}$ does not have full row rank.

The Jacobian used to solve \rf{solvefi} for stages $k\ge 0$ is called the {\sl System Jacobian} of \rf{maineq}, an $n\times n$ matrix $\Jac\offsetpair=(\Jij{i}{j})$ defined by
\begin{equation}\label{eq:sysjac}
\Jij{i}{j} = \pp{f_i^{(c_i)}}{x_j^{(d_j)}} = \pp{f_i}{x_j^{(d_j-c_i)}} =
\left\{
\begin{array}{lll}
\pp{f_i}{x_j^{(\sigma_{ij})}} & \qquad&\text{if $d_j-c_i=\sigma_{ij},$}\text{ and}\\[1ex]
0 &  \qquad&\text{otherwise\;,}
\end{array}
\right.
\end{equation}
with $i,j=\rnge{1}{n}$. The second ``$=$'' in \rf{sysjac} results from Griewank's Lemma (see later \LEref{griewank}), and the third ``$=$'' follows from \rf{cidj}.

Using the derivatives computed in stages $k=\rnge{k_d}{0}$, we have found a {\sl consistent point}: it is either $\bigl(t,x_1^{(<d_1)},\ldots,x_n^{(<d_n)}\bigr)$, if every $x_j^{(d_j)}$ occurs in a {\em jointly linear} way in every $f_i^{(c_i)}$, or $\bigl(t,x_1^{(\le d_1)},\ldots,x_n^{(\le d_n)}\bigr)$, if some $x_j^{(d_j)}$ occurs nonlinearly in \rf{solvefi} at stage~$k=0$.

Although a different $\offsetpair$ produces a different solution scheme \rrf{stagek}{forxj} and generally a different System Jacobian $\Jac\offsetpair$, all $\Jac$'s nevertheless share the same determinant \cite{nedialkov2005solving}.
If one $\Jac$ is nonsingular---and hence all $\Jac$'s are---at a consistent point, then there exists (locally) a unique solution through this point \cite{Pryce2001a}. 
The SA can now use the canonical $\offsetpair$ to determine the {\sl structural index} and the number of~{\sl DOF}\,:
\begin{align*}
\nu_S &= \max_i c_i+
\left\{
\begin{array}{lll}
1 & \qquad&\text{if $\min_j d_j=0$}\\[1ex]
0&  \qquad&\text{otherwise\;}
\end{array}
\right.\\[1ex]
\text{and}\quad \text{\dof} &= \val{\Sigma} = \sum_{(i,j)\in T}\sij{i}{j} =  \sum_j d_j - \sum_i c_i\;.
\end{align*}
Here ``DOF'' refers to the phrase ``degrees of freedom'', while \dof is the corresponding number.

\begin{example}
We illustrate\footnote{When we present a DAE example, we also present its signature matrix $\Sigma$, the canonical offset pair $\offsetpair$, and the associated System Jacobian $\Jac$.}
the above concepts with the simple pendulum, a DAE of differentiation index 3.
\begin{equation}\label{eq:pend}
\begin{aligned}
0 = f_1 &= x''+x\lam\\
0 = f_2 &= y''+y\lam -g\\
0 = f_3 &= x^2+y^2-L^2
\end{aligned}
\;\;\;\quad
\Sigma = 
\begin{blockarray}{r@{\hspace{3mm}}c@{\hspace{2mm}}c@{\hspace{2mm}}c@{\hspace{2mm}}c}
&  x &   y & \lam  & \s{c_i} & \\
\begin{block}{r @{\hspace{3mm}}[c@{\hspace{2mm}}c@{\hspace{2mm}}c@{\hspace{2mm}}]c}
f_1 & 2^\bullet  &   & 0^\circ & \s0 \\
f_2 &   & 2^\circ  & 0^\bullet & \s0 \\
f_3 & 0^\circ  & 0^\bullet  &  & \s2  \\
\end{block}
\s{d_j}& \s2\phanbull &\s2\phanbull &\s0\phanbull \\[-4ex]
\end{blockarray}
\;\;\;\quad
\Jac = \begin{blockarray}{r@{\hspace{3mm}}c@{\hspace{2mm}}c@{\hspace{2mm}}c}
& x'' &  y'' & \lam \\
\begin{block}{r@{\hspace{3mm}} [c@{\hspace{2mm}}c@{\hspace{2mm}}c]}
f_1 & 1  &    &x  \\
f_2 &   &1    &y  \\
f''_3 & 2x &2y   &  \\
\end{block}
& \\[-4ex]
\end{blockarray} 
\end{equation}
The state variables are $x,y$, and $\lam$;  $G$ is gravity and $L>0$ is the length of the pendulum. There are two HVTs of $\Sigma$, marked with $\bullet$ and $\circ$, respectively. A blank in $\Sigma$ denotes $-\infty$, and a blank in $\Jac$ denotes 0.
The row and column labels in $\Jac$, showing equations and variables differentiated to order $c_i$ and $d_j$, aim to remind the reader of the formula for $\Jac$ in \rf{sysjac}.

Since $\det(\Jac)=-2(x^2+y^2)=-2L^2\neq 0$, the $\Jac$ is nonsingular, and the SA succeeds. 
The derivatives $x'',y'',\lam$ occur in a jointly linear way in \rf{pend}, so a consistent point is
\[
\left(t,x^{(<d_1)},y^{(<d_2)},\lam^{(<d_3)}\right)
= \left(t,x^{(<2)},y^{(<2)},\lam^{(<0)}\right)
= \left(t,x,x',y,y'\right)
\]
that satisfies \rf{solvefi} in stages $k=-2,-1$, that is, $f_3=0$ and $f'_3=0$.
The structural index is $\nu_S=\min_i c_i+1=2+1=3$ (because $\min_j d_j = d_3=0$), which equals the differentiation index. The number of DOF is $\dof=\val\Sigma=\sum_j d_j-\sum_i c_i=4-2=2$. The solution scheme prescribed by the canonical $\offsetpair$ is shown in \TBref{bltriPend}.

\begin{table}[H]
{\normalsize
\[
\begin{array}{*4{r@{\hspace{5mm}}}r}
k &  \text{solve}  &\text{for} &\text{using} & \text{Jacobian} \\
\hline
-2 & f_3     & x,y    &-   & [2x\quad 2y]\\  
-1 & f'_3    & x',y'  &x,y & [2x\quad 2y]  \\
\ge 0 & f_1^{(k)},f^{(k)}_2,f^{(k+2)}_3 & \lam^{(k)},x^{(k+2)},y^{(k+2)} &\lam^{(<k)}, x^{(<k+2)}, y^{(<k+2)} & \Jac \\
\end{array}
\]
\caption{Solution scheme for the simple pendulum DAE.}
\label{tb:bltriPend}
}
\end{table}
\end{example}

\section{Structural analysis's failure.}\label{sc:fail}
We discuss the $\Sigma$-method's failures in this section. Hidden symbolic cancellation is the easiest way that can make the \Sigmeth fail with structurally singular System Jacobian \cite{nedialkov2005solving}; see \SSCref{fail1}. However, some failures of SA can be subtle and obscure, for the System Jacobian  is identically singular but structurally nonsingular. We characterize both failure cases in \SSCref{fail1} and \SSCref{fail2}, respectively.

We use $u\nequiv0$ to mean that $u$ is {\sl generically nonzero} (that is, not identically zero) for all values of the variables occurring in the expressions that define $u$. This $u$ may be a scalar, a vector, or a matrix, depending on context. 
Similarly, we use $\det(\~A)\nequiv0$ to mean that a matrix $\~A$ is {\sl generically nonsingular}, that is, not identically singular.

\subsection{Symbolic cancellation may cause failure.}\label{ssc:fail1}
In the encoding of a DAE, an equation $f_1$ may be, for instance, $x_2+(x_1x_2)'-x_1'x_2$ or $x_1+x_2+\cos^2 x'_1+\sin^2 x'_1$. We say a {\sl symbolic cancellation} occurs in $f_1$, because it simplifies to $x_2+x_1x'_2$ and $x_1+x_2+1$, respectively. That is, $f_1$ does not {\em truly} depend on $x'_1$. However, we note that the problem of detecting such true dependence (which is equivalent to recognizing zero) in any expressions is unsolvable in general \cite{richardson1968}.

Codes like \daets \cite{nedialkov2008solving} and \daesa \cite{NedialkovPryce2012b, NedialkovPryce2012a}, which are implemented through operator overloading and do not perform symbolic simplifications, compute a {\em formal} $\compsij{i}{j}$  instead of a {\em true} one when constructing the signature matrix. For example, both codes would find for $f_1$ above the formal $\compsij{1}{1}=1$ instead of the true $\sij{1}{1}=0$.
By a formal $\compsij{i}{j}$, we mean that $x_j^{(\compsij{i}{j})}$ appears as a highest-order derivative (HOD) in the encoding of an equation $f_i$, while a true $\sij{i}{j}$ means that $f_i$ is not constant with respect to a HOD $x_j^{(\sij{i}{j})}$ and thus truly depends on it---equivalently $\ppin{f_i}{x_j^{(\sij{i}{j})}}\nequiv0$. Obviously~$\compsij{i}{j}\ge\sij{i}{j}$.

For a formally computed $\compSig=(\compsij{i}{j})$, also a valid offset pair $\compoffvec$ is found and a System Jacobian $\compJ$ is derived from $\compoffvec$ and $\compSig$ by \rf{sysjac}. Suppose symbolic cancellations happen in some $f_i$ and make $\compsij{i}{j}>\sij{i}{j}$. Then $f_i$ does not truly depend on $x_j^{(\compsij{i}{j})}$, and $\compJij{i}{j}$ is identically zero by \rf{sysjac}, whether $\widetilde{d}_j-\widetilde{c}_i=\compsij{i}{j}$ holds or not. In this case, $\compJ$ has more identically zero entries than does a $\Jac$ based on the true $\Sigma$ and $\offsetpair$, hence being more likely structurally singular.

Overestimating some $\sij{i}{j}$ of $\Sigma$ may seem dangerous to the SA's success. Fortunately, modern modeling environments usually perform simplifications on problem formulation \cite{maplesimsymdae, Carpanzano1998133, sjolund2011debugging}. They can reduce the occurrence of a structurally singular $\Jac$, when SA is applied. 
Theorems 5.1 and 5.2 in \cite{nedialkov2005solving} also ensure that, if $\val\compSig=\val\Sigma$ and $\det(\Jac)\nequiv 0$, then an offset pair $\compoffvec$ of the formal $\compSig$ is also valid for $\Sigma$, and $\det(\compJ)=\det(\Jac)\nequiv 0$. In this case, such an overestimation would treat some identically zero entries of $\Jac$ as nonzeros and simply make the solution scheme slightly less efficient; see \cite[Examples 5.1 and 5.2]{nedialkov2005solving}. By the same theorems, in the case $\val\compSig>\val\Sigma$, $\compJ$ {\em must} be structurally singular.

\subsection{SA can fail when $\Jac$ is structurally nonsingular.}\label{ssc:fail2}
Hereafter we focus on the case where an identically singular System Jacobian $\Jac$ is structurally nonsingular---that is, there exists a HVT $T$ of $\Sigma$ such that $\Jij{i}{j}\nequiv0$ for all $(i,j)\in T$. We shall simply say ``identically singular'' to refer to this case. 

When $\Jac$ is identically singular, the DAE may be still solvable, but the way its equations are written may not properly reflect its structure.
For example, if the pendulum DAE \rf{pend} $\~f=0$ is equivalently formulated as $\~M\~f=0$ with $\~M$ being a random nonsingular constant $3\times 3$ matrix, then each row of $\Sigma$ is $[2,2,0]$, the canonical offset pair is $(\~c;\~d)=(0,0,0;2,2,0)$, and the resulting $\Jac$ is identically singular \cite{nedialkov2008daets}. 
\begin{example}
We illustrate a failure case with the following DAE\footnote{The original formulation denotes driving functions as $f_1, f_2$.} 
in \cite[p. 23]{BrenanCampbelPetzold}.\begin{equation*}
\begin{alignedat}{3}
0 = f_1 &= x' &&+ ty' &&- h_1(t) \\
0 = f_2 &= x &&+ ty &&- h_2(t)
\end{alignedat}
\qquad\Sigma=
\begin{blockarray}{r@{\hspace{10pt}}c@{\hspace{2mm}}c@{\hspace{10pt}} ll}
&  \phanbull x\phanbull\phanbull  &  y\phanbull& \hspace{-2pt} \s{c_i} \\
\begin{block}{r @{\hspace{10pt}}[c@{\hspace{2mm}}c]@{\hspace{10pt}}ll}
f_1 & 1^\bullet & 1\phanbull   &\s0  \\
f_2 & 0\phanbull & 0^\bullet   &\s1  \\
\end{block}
\s{d_j}& \s1\phanbull &\s1\phanbull  \\[-4ex]
\end{blockarray}
\qquad\Jac = 
\begin{blockarray}{rc@{\hspace{2mm}}c cc}
&  \phanbull x'\phanbull  &  \phanbull y'\phanbull \\
\begin{block}{r @{\hspace{10pt}}[@{\hspace{2mm}}c@{\hspace{2mm}}c]cc}
f_1 & 1 &  t     \\
f'_2 & 1  &  t     \\
\end{block}
&\\[-4ex]
\end{blockarray}
\end{equation*}
The SA fails since $\det(\Jac)\equiv0$. Here $\Jac$ is identically singular but not structurally singular.

One simple fix is to replace $f_1$ by $\newf_1 = -f_1 + f'_2$, 
which results in the problem below; cf.~\cite[Example 5]{Matt93a}.
\begin{equation*}
\begin{aligned}
0 = \newf_1 &= y+h_1(t)-h'_2(t) \\
0 = f_2 &= x + ty - h_2(t)
\end{aligned}
\qquad\newSig=
\begin{blockarray}{r@{\hspace{10pt}}c@{\hspace{2mm}}c@{\hspace{10pt}} ll}
&  \phanbull x\phanbull\phanbull  & y^{\phantom{\bullet}} & \hspace{-2pt} \s{c_i} \\
\begin{block}{r @{\hspace{10pt}}[c@{\hspace{2mm}}c]@{\hspace{10pt}}ll}
\newf_1 &  & 0^\bullet   &\s0  \\
f_2 & 0^\bullet & 0^{\phantom{\bullet}}   &\s0  \\
\end{block}
\s{d_j}& \s0^{\phantom{\bullet}} &\s0^{\phantom{\bullet}} \\[-4ex]
\end{blockarray}
\qquad \newJac = 
\begin{blockarray}{rc@{\hspace{2mm}}c cc}
&  \phanbull x\phanbull  &  \phanbull y\phanbull   \\
\begin{block}{r @{\hspace{10pt}}[@{\hspace{2mm}}c@{\hspace{2mm}}c]cc}
\newf_1 &  &  1     \\
f_2 & 1  &  t     \\
\end{block}
&\\[-4ex]
\end{blockarray}
\end{equation*}
Since $\det(\newJ)=-1$, the SA succeeds. Notice $\val\newSig=0<1=\val\Sigma$. This is a simple illustration of our linear combination method in \SSCref{LC}.

\medskip

Another simple fix is to introduce a variable $z=x+ty$ and eliminate $x$ in $f_1$ and~$f_2$.
\begin{equation*}
\begin{alignedat}{3}
0 = \newf_1 &= -y\ +\ &&z' &&-h_1(t) \\
0 = \newf_2 &=     &&z  &&- h_2(t)
\end{alignedat}
\qquad\newSig=
\begin{blockarray}{r@{\hspace{10pt}}c@{\hspace{2mm}}c@{\hspace{10pt}} ll}
& \phanbull y\phanbull\phanbull  &  z\phanbull & \hspace{-2pt} \s{c_i} \\
\begin{block}{r @{\hspace{10pt}}[c@{\hspace{2mm}}c]@{\hspace{10pt}}ll}
\newf_1 &   0^\bullet  & 1^{\phantom\bullet} &\s0  \\
\newf_2 &  & 0^\bullet    &\s1  \\
\end{block}
\s{d_j}& \s0^{\phantom{\bullet}} &\s1^{\phantom{\bullet}}  \\[-4ex]
\end{blockarray}
\qquad \newJac = 
\begin{blockarray}{rc@{\hspace{2mm}}c@{\hspace{2mm}} cc}
&  \phanbull y\phanbull  &  \phanbull z'\phanbull   \\
\begin{block}{r @{\hspace{10pt}}[c@{\hspace{2mm}}c]cc}
\newf_1 & -1 &  1     \\
\newf'_2 &    &  1     \\
\end{block}
&\\[-4ex]
\end{blockarray}
\end{equation*}
For this resulting DAE, $\det(\newJ)=-1$, and the SA succeeds. After solving for $y$ and $z$, we can obtain $x=z-ty$. This fix also gives $\val\newSig=0<1=\val\Sigma$, and is a simple illustration of our expression substitution method in \SSCref{ES}.
\end{example}

A conjecture in \cite{nedialkov2008daets} attributed the SA's failure to a DAE ``being not sparse enough to reflect its underlying mathematical structure.'' The sparsity refers to occurrence of only a few derivatives in each equation.  However, as we shall see later, decreasing $\val\Sigma$ may be the key to deriving a better problem formulation of a DAE. Our conversion methods aim to do so, and are the main contribution of this article.

\section{Conversion methods.}\label{sc:methods}
We present two conversion methods that attempt to fix SA's failures in a systematic way. The first method is based on replacing an existing equation by a linear combination of some equations and derivatives of them. We call this method the linear combination (LC) method and describe it in \SSCref{LC}. 
The second method is based on substituting newly introduced variables for some expressions and enlarging the system. We call this method the expression substitution (ES) method and describe it in~\SSCref{ES}.

Given a DAE \rf{maineq}, we assume henceforth that $\val\Sigma$ is finite and that a System Jacobian $\Jac$ is identically singular but structurally nonsingular. We also assume that the equations in \rf{maineq} are sufficiently differentiable, so that our methods fit into the $\Sigma$-method theory; see Theorem 4.2 in \cite{Pryce2001a} and \S3 in \cite{nedialkov2005solving}.

After a conversion, we denote the corresponding signature matrix as $\newSig$ and System Jacobian as $\newJac$. If $\val\newSig$ is finite and  $\newJac$ is identically singular still, then we can perform another conversion, using either of the methods, provided the corresponding conditions are satisfied.

Suppose a sequence of conversions produces a solvable DAE with $\val\newSig\ge0$ and a generically nonsingular $\newJ$. Given the fact that each conversion reduces the value of the signature matrix by at least one, the total number of conversions does not exceed the value of the original signature matrix.

If the resulting system is SIP after a conversion, that is, $\val\newSig=-\infty$, then we say the original DAE is {\sl ill posed}.

\subsection{Linear combination method.}\label{ssc:LC}

Let $\vecu=[u_1,\ldots,u_n]^T\nequiv \~0$ be a nonzero $n$-vector function in the cokernel of $\Jac$, that is, $\vecu\in\coker\Jac$ or equivalently $\Jac^T \vecu=\~0$. We consider $\Jac$ and $\vecu$ as functions of $t$ and derivatives of the $x_j(t)$'s, $j=\oneton$. 

For convenience, denote
\begin{align}\label{eq:truehod}
\hoder{x_j}{\omega} &=
\left\{
\begin{array}{l}
\text{order of the highest derivative to which $x_j$ occurs in $\omega$; or}\\[1ex]
-\infty\,\,\text{ if $x_j$ does not occur in $\omega$\;.}
\end{array}
\right.
\end{align}
Here $\omega$ can be a scalar, a vector, or a matrix, depending on context.
This notation is a generalization of the $(i,j)$ entry of $\Sigma$: $\sij{i}{j}=\hoder{x_j}{f_i}$.

\begin{lemma}\label{le:griewank}
{\bf\em (Griewank's Lemma)}\cite{nedialkov2007solving}
Let $w$ be a function of $t$, the $x_j(t)$, $j=\oneton$, and derivatives of them. Denote $w^{(p)}=\text{d}^p w/\text{d}t^p$, where $p\ge 0$. If $\hoder{x_j}{w}\le q$, then
\begin{equation}\label{eq:griewank}
\pp{w}{x_j^{(q)}} = \pp{w'}{x_j^{(q+1)}} = \cdots = \pp{w^{(p)}}{x_j^{(q+p)}}\;.
\end{equation}
\end{lemma}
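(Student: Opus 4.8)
The plan is to use the chain rule for the total time derivative together with a short induction on $p$. The only nontrivial content is the base case $\pp{w}{x_j^{(q)}}=\pp{w'}{x_j^{(q+1)}}$; once this is established, every subsequent equality follows by applying the same identity repeatedly. So I would first regard $w$ as a function of $t$ and of finitely many \emph{independent} arguments $x_k^{(l)}$, for which partial derivatives are taken in the ordinary sense, and record the total derivative
\[
w' = \pp{w}{t} + \sum_{k,l} \pp{w}{x_k^{(l)}}\, x_k^{(l+1)},
\]
the sum being finite. The central observation is that, because $\hoder{x_j}{w}\le q$, the function $w$ involves $x_j$ only up to order $q$; hence so do $\pp{w}{t}$ and each coefficient $\pp{w}{x_k^{(l)}}$, since partial differentiation of $w$ with respect to one of its arguments cannot introduce a higher-order derivative of $x_j$. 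Consequently the variable $x_j^{(q+1)}$ occurs in the displayed expression for $w'$ only in the single term $\bigl(\pp{w}{x_j^{(q)}}\bigr)\,x_j^{(q+1)}$, arising from $k=j$, $l=q$, and it occurs there linearly with a coefficient free of $x_j^{(q+1)}$.

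Given this, I would differentiate $w'$ partially with respect to $x_j^{(q+1)}$, treating all the $x_k^{(l)}$ as independent. Every other term is annihilated, and the isolated linear term contributes its coefficient, yielding $\pp{w'}{x_j^{(q+1)}}=\pp{w}{x_j^{(q)}}$, which is the base case.

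For the inductive step I would note that differentiation raises the highest order of $x_j$ by at most one, so $\hoder{x_j}{w^{(p)}}\le q+p$ for every $p\ge 0$. Applying the base-case identity with $w^{(p)}$ in place of $w$ and $q+p$ in place of $q$ gives $\pp{w^{(p+1)}}{x_j^{(q+p+1)}}=\pp{w^{(p)}}{x_j^{(q+p)}}$, and chaining these equalities for $p=0,1,\ldots$ produces the full statement of the lemma.

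The main obstacle---indeed the only delicate point---is the bookkeeping behind the central observation: one must argue cleanly that none of $\pp{w}{t}$ or $\pp{w}{x_k^{(l)}}$ can contain $x_j^{(q+1)}$, so that the lone explicit occurrence of $x_j^{(q+1)}$ in $w'$ is exactly the one isolated above. This is precisely where the hypothesis $\hoder{x_j}{w}\le q$ is used, and it is what forces the partial derivative to collapse to a single term; everything else is routine.
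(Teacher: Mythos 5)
Your proof is correct; the paper itself states this lemma without proof, citing \cite{nedialkov2007solving}, and your argument---writing $w'$ via the total-derivative chain rule, observing that $x_j^{(q+1)}$ appears only linearly in the single term $\bigl(\ppin{w}{x_j^{(q)}}\bigr)x_j^{(q+1)}$ because the coefficients inherit $\hoder{x_j}{w}\le q$, and then inducting on $p$---is exactly the standard proof given in that reference. No gaps.
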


Denote
\begin{align}\label{eq:LCdef}
\eqsetI = \{ \, i \mid  u_i\nequiv 0 \,\}, \quad \LCconst = \min_{i \in \eqsetI} c_i, \qaq \nzset=\setbg{\,i\in \eqsetI \mid\, c_i=\LCconst }\;.
\end{align}
We prove two preliminary lemmas before the main \THref{LCConv}, on which the LC method is based.

\begin{lemma}\label{le:LCConv}
Assume that $\vecu\in\coker\Jac$ and $\vecu\nequiv\~0$. If
\begin{equation}\label{eq:LCcond}
\hoder{x_j}{\vecu}< d_j-\LCconst, \quad \text{for all $j=\oneton$}\;,
\end{equation}
then $\hoder{x_j}{\newf}< d_j-\LCconst\ $  for all $j=\oneton$, where
\begin{equation}\label{eq:LCnewf}
\newf =  \sum_{i\in \eqsetI} u_i f_i^{\left(c_i-\LCconst\right)}\;.
\end{equation}
\end{lemma}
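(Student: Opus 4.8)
The plan is to prove the strict bound in two moves: first establish the \emph{non-strict} bound $\hoder{x_j}{\newf}\le d_j-\LCconst$ for every $j$, and then show that the coefficient of the top derivative $x_j^{(d_j-\LCconst)}$ in $\newf$, namely $\ppin{\newf}{x_j^{(d_j-\LCconst)}}$, vanishes identically. Since ``$x_j$ occurs in $\newf$ to order $d_j-\LCconst$'' means exactly $\ppin{\newf}{x_j^{(d_j-\LCconst)}}\nequiv0$, these two facts together force $\hoder{x_j}{\newf}<d_j-\LCconst$.

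For the non-strict bound I would examine each summand $u_i f_i^{(c_i-\LCconst)}$ of \rf{LCnewf} separately. Note $c_i-\LCconst\ge0$ for every $i\in\eqsetI$, by the choice $\LCconst=\min_{i\in\eqsetI}c_i$ in \rf{LCdef}, so the derivatives are genuine. The factor $u_i$ satisfies $\hoder{x_j}{u_i}\le\hoder{x_j}{\vecu}<d_j-\LCconst$ by hypothesis \rf{LCcond}, while differentiating $f_i$ a total of $c_i-\LCconst$ times raises the highest order of $x_j$ from $\sij{i}{j}$ to $\sij{i}{j}+c_i-\LCconst$, which is $\le d_j-\LCconst$ because $\sij{i}{j}\le d_j-c_i$ by the offset inequality \rf{cidj}. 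The highest order of $x_j$ in a product, and hence in the sum, is at most the maximum of the orders of the pieces, so $\hoder{x_j}{\newf}\le d_j-\LCconst$.

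For the vanishing of the top coefficient I would differentiate \rf{LCnewf} with respect to $x_j^{(d_j-\LCconst)}$ and apply the product rule. The term carrying $\ppin{u_i}{x_j^{(d_j-\LCconst)}}$ drops out, because \rf{LCcond} says $u_i$ does not reach order $d_j-\LCconst$ in $x_j$; this is the one place the hypothesis is used. What remains is $\sum_{i\in\eqsetI}u_i\,\ppin{f_i^{(c_i-\LCconst)}}{x_j^{(d_j-\LCconst)}}$. Here Griewank's Lemma (\LEref{griewank}), applied with $w=f_i$, $q=\sij{i}{j}$, and $p=c_i-\LCconst$, is the crux: it lets me identify $\ppin{f_i^{(c_i-\LCconst)}}{x_j^{(d_j-\LCconst)}}$ with the System Jacobian entry $\Jij{i}{j}$. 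Concretely, when $\sij{i}{j}=d_j-c_i$ the target order $d_j-\LCconst$ equals $\sij{i}{j}+c_i-\LCconst$, and Griewank's Lemma gives $\ppin{f_i^{(c_i-\LCconst)}}{x_j^{(d_j-\LCconst)}}=\ppin{f_i}{x_j^{(\sij{i}{j})}}=\Jij{i}{j}$; when $\sij{i}{j}<d_j-c_i$ the target order exceeds the highest order of $x_j$ present, so the partial derivative is $0$, which again matches $\Jij{i}{j}=0$ from \rf{sysjac}; and the case $\sij{i}{j}>d_j-c_i$ is ruled out by \rf{cidj}. Thus $\ppin{\newf}{x_j^{(d_j-\LCconst)}}=\sum_{i\in\eqsetI}u_i\Jij{i}{j}$. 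Since $u_i\equiv0$ for $i\notin\eqsetI$ by the definition \rf{LCdef} of $\eqsetI$, the sum extends harmlessly over all $i$, yielding precisely the $j$th component of $\Jac^T\vecu$, which is $\~0$ because $\vecu\in\coker\Jac$. Hence the top coefficient vanishes identically, and combined with the non-strict bound this completes the argument.

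The step I expect to be the main obstacle is the case analysis in applying Griewank's Lemma: one must verify that the differentiation bookkeeping really turns $\ppin{f_i^{(c_i-\LCconst)}}{x_j^{(d_j-\LCconst)}}$ into exactly $\Jij{i}{j}$ in every admissible index situation, so that the residual sum is recognizable as $(\Jac^T\vecu)_j$. Everything else is routine order counting.
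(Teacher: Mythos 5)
Your proposal is correct and follows essentially the same route as the paper's proof: Griewank's Lemma identifies $\ppin{f_i^{(c_i-\LCconst)}}{x_j^{(d_j-\LCconst)}}$ with the Jacobian entry $\Jij{i}{j}$, the LC condition \rf{LCcond} kills the product-rule term involving $\ppin{u_i}{x_j^{(d_j-\LCconst)}}$, and cokernel membership makes the remaining sum $(\Jac^T\vecu)_j=0$. Your explicit split into a non-strict bound plus a vanishing top coefficient, and your case analysis on whether $\sij{i}{j}=d_j-c_i$, merely spell out steps the paper leaves implicit.
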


\begin{proof}
The formula for $\LCconst$ gives $c_i-\LCconst \ge 0$ for all $i\in \eqsetI$. By \rf{cidj}, $\hoder{x_j}{f_i}=\sij{i}{j}\le d_j-c_i$. Applying Griewank's Lemma \rf{griewank} to \rf{sysjac}  with $w=f_i$ and $q=c_i-\LCconst$ yields
\begin{align}\label{eq:gl}
\Jac_{ij} 
= \frac{\partial f_i}{\partial x_j^{(d_j-c_i)}} 
= \frac{\partial f_i^{(c_i-\LCconst)}}{\partial x_j^{(d_j-c_i+c_i-\LCconst)}} 
= \frac{\partial f_i^{(c_i-\LCconst)}}{\partial x_j^{(d_j-\LCconst)}} \qquad \text{for $i\in \eqsetI$ and all $j=\oneton$}\;.
\end{align}
This shows that such an $f_i^{(c_i-\LCconst)}$ depends on $x_j^{(\le d_j-\LCconst)}$ only. Then for all $j=\oneton$,
\begin{alignat}{3}
\hspace{6mm}
\frac{\partial \newf}{\partial x_j^{(d_j-\LCconst)}} 
&= \frac{\partial
\left(  \sum_{i \in \eqsetI} u_i  f_i^{(c_i-\LCconst)}  \right)
}
{\partial x_j^{(d_j-\LCconst)}}
&\hspace{10mm}&\text{by the definition of $\newf$ in \rf{LCnewf}}  
\nonumber\\
&= \sum_{i \in \eqsetI} u_i \frac{\partial f_i^{(c_i-\LCconst)}}{\partial x_j^{(d_j-\LCconst)}} 
=\sum_{i \in \eqsetI} u_i \Jac_{ij}
&&\text{by \rf{LCcond} and then \rf{gl} }
\label{eq:deriveLC}\\
&
= (\Jac^T \vecu)_j = 0
&& \text{since $\vecu\in\coker\Jac$
}\;. 
\nonumber
\end{alignat}
Hence $\newf$ depends on $x_j^{(<d_j-\LCconst)}$ only, for all $j$---this results in the inequality in \rf{LCnewf}.\qed
\end{proof}

\begin{lemma}\label{le:sigred}
Assume that an $n\times n$ signature matrix $\Sigma$ has a finite $\val\Sigma$ and a valid offset pair $\offsetpair$.
Given a row of index $\indxk$, if we replace in row $\indxk$ all entries $\sij{\indxk}{j}$ by $\newsij{\indxk}{j}<d_j-c_\indxk$,  then $\val{\newSig} < \val{\Sigma}$, where $\newSig$ is the resulting signature matrix.
\end{lemma}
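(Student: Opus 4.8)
The plan is to compare $\val\newSig$ with $\val\Sigma$ by routing both through the valid offset pair $\offsetpair$. The one fact I would lean on is the identity $\val\Sigma=\sum_j d_j-\sum_i c_i$, which holds for \emph{any} valid offset pair: on a HVT of $\Sigma$ one has $d_j-c_i=\sij{i}{j}$ by the equality clause of \rf{cidj}, and summing this over the transversal (which meets each row and each column once) telescopes the entries into $\sum_j d_j-\sum_i c_i$.

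First I would pick a HVT $\newT$ of $\newSig$. Being a transversal, $\newT$ selects exactly one position in the modified row $\indxk$, say $(\indxk,j_0)$, and since $\newSig$ agrees with $\Sigma$ in every other row I can split
\[
\val\newSig=\sum_{(i,j)\in\newT}\newsij{i}{j}
=\newsij{\indxk}{j_0}+\sum_{(i,j)\in\newT,\ i\neq\indxk}\sij{i}{j}.
\]
Next I would bound the summands using $\offsetpair$: validity gives $\sij{i}{j}\le d_j-c_i$ for the unchanged rows $i\neq\indxk$, while the hypothesis supplies the \emph{strict} bound $\newsij{\indxk}{j_0}<d_{j_0}-c_\indxk$ for the single modified row. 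Adding, and again using that the row and column indices appearing in $\newT$ each exhaust $\oneton$ exactly once, the bounds telescope:
\[
\val\newSig<(d_{j_0}-c_\indxk)+\sum_{(i,j)\in\newT,\ i\neq\indxk}(d_j-c_i)
=\sum_j d_j-\sum_i c_i=\val\Sigma.
\]

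The crux is simply that one strict inequality, contributed by the altered row $\indxk$, combined with non-strict bounds in the remaining rows, makes the total strict regardless of which column $j_0$ the HVT of $\newSig$ happens to use. The only loose end to mention is the structurally ill posed case $\val\newSig=-\infty$, where the conclusion $\val\newSig<\val\Sigma$ is immediate because $\val\Sigma$ is finite; in the structurally well posed case the selected entry $\newsij{\indxk}{j_0}$ is finite, so the displayed chain applies verbatim. I do not anticipate a genuine obstacle here---the whole argument is the observation that a transversal sum of the offset bounds $d_j-c_i$ collapses to $\sum_j d_j-\sum_i c_i$, and that the lone strict bound in row $\indxk$ survives that summation.
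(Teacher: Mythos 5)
Your proof is correct and follows essentially the same route as the paper's: split the HVT sum of $\newSig$ into the entry from the modified row $\indxk$ (bounded strictly by $d_{j_0}-c_\indxk$) plus the unchanged entries (bounded by $d_j-c_i$), then telescope the transversal sum of the bounds to $\sum_j d_j-\sum_i c_i=\val\Sigma$. Your explicit handling of the $\val\newSig=-\infty$ case is a small extra care the paper omits, but the argument is otherwise identical.
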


\begin{proof}
Since $\newsij{\indxk}{j} < d_j-c_\indxk$ for all $j$, the intersection of a HVT $\newT$ of $\newSig$ with row $\indxk$ is a position $(\indxk,r)$ with $\newsij{\indxk}{r} < d_r-c_\indxk$. Then
\begin{equation*}
\hspace{9mm}
\val{\newSig} 
= \sum_{(i,j)\in T} \newsij{i}{j}
= \newsij{l}{r} + \sum_{(i,j)\in T\setminus\set{(l,r)}} \sij{i}{j}
< \sum_j d_j - \sum_i c_i = \val{\Sigma}\;.
\hspace{9mm}\qed
\end{equation*}
\end{proof}

The LC method is based on the following theorem.
\begin{theorem}\label{th:LCConv}
Let $\eqsetI$, $\LCconst$, and $\nzset$ be as defined in \rf{LCdef}. If we replace an equation $f_\indxk$, $\indxk\in\nzset$, by $\newf$ in \rf{LCnewf}, then  $\val\newSig<\val\Sigma$, where $\newSig$ is the signature matrix of the resulting~DAE.
\end{theorem}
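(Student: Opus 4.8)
The plan is to derive the theorem directly from the two preceding lemmas: \LEref{LCConv} bounds the orders of the derivatives that survive in the replacement equation $\newf$, and \LEref{sigred} turns such a bound into a strict drop in the signature-matrix value. The only real work is to align the hypotheses so the two results compose.

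First I would invoke \LEref{LCConv}, available because the LC method is carried out precisely under condition \rf{LCcond}. It gives $\hoder{x_j}{\newf}< d_j-\LCconst$ for every $j=\oneton$. I would then use the definition of $\nzset$ in \rf{LCdef}: since $\indxk\in\nzset$ we have $c_\indxk=\LCconst$, so the bound sharpens to $\hoder{x_j}{\newf}< d_j-c_\indxk$ for all $j$. This is exactly the row-$\indxk$ inequality demanded by \LEref{sigred}.

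Next I would record that replacing $f_\indxk$ by $\newf$ alters $\Sigma$ in row $\indxk$ only: each new entry is $\newsij{\indxk}{j}=\hoder{x_j}{\newf}$, with all other rows untouched. (When $x_j$ is absent from $\newf$ the entry is $-\infty$, which still obeys the strict bound since $d_j-c_\indxk$ is finite.) Applying \LEref{sigred} with the original valid offset pair $\offsetpair$ then delivers $\val\newSig<\val\Sigma$.

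The step needing the most attention is a bookkeeping point rather than a computation: one must check that \LEref{sigred} is legitimately applied with the offset pair of the \emph{original} $\Sigma$. That lemma never requires $\offsetpair$ to remain valid for $\newSig$; it only needs the new row entries to fall strictly below $d_j-c_\indxk$, which is precisely what \LEref{LCConv} together with $c_\indxk=\LCconst$ supplies. Once this is observed the theorem is immediate---the substance of the LC method resides in the two lemmas, not in this final assembly.
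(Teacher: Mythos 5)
Your proof is correct and follows essentially the same route as the paper's: apply \LEref{LCConv} to get $\newsij{\indxk}{j}=\hoder{x_j}{\newf}<d_j-\LCconst=d_j-c_\indxk$ (using $c_\indxk=\LCconst$ for $\indxk\in\nzset$), then conclude via \LEref{sigred}. Your additional bookkeeping remarks (the $-\infty$ entries and the fact that \LEref{sigred} does not require $\offsetpair$ to remain valid for $\newSig$) are accurate but not needed beyond what the paper states.
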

\begin{proof}
By \LEref{LCConv}, such a replacement results in $\newsij{\indxk}{j}=\hoder{x_j}{\newf_\indxk}< d_j-c_\indxk$ for all $j=\oneton$. Immediate from \LEref{sigred} is $\val\newSig<\val\Sigma$.\qed
\end{proof}

Usually we write $\newf$ as $\newf_\indxk$ in the resulting DAE.

We call \rf{LCcond} the LC condition, which is merely {\em sufficient} for the strict decrease: 
if \rf{LCcond} becomes $\hoder{x_j}{\vecu}\le d_j-\LCconst$ for all $j=\oneton$ with equality for some $j$, then we can achieve only $\val{\newSig}\le\val{\Sigma}$, while the strict ``$<$'' may not hold.

\begin{example}
We illustrate the LC method with the following simple example:
\begin{equation*}
\begin{alignedat}{4}
0= f_1 &= -&&x'_1 + x_3 \\
0= f_2 &= -&&x'_2+ x_4
\end{alignedat}
\qquad\qquad
\begin{alignedat}{4}
0= f_3 &= &&x_1 x_2+g_1(t) \\
0= f_4 &= &&x_1 x_4 + x_2x_3 + x_1+x_2&&+g_2(t)\;,
\end{alignedat}
\end{equation*}
where $g_1$ and $g_2$ are driving functions.
\begin{align*}
\Sigma=
\begin{blockarray}{r@{\hspace{10pt}}*4{c@{\hspace{2mm}}}@{\hspace{1mm}} ll}
&  x_1\phanbull &   x_2\phanbull &  x_3\phanbull &  x_4\phanbull & \s{c_i} \\
\begin{block}{r @{\hspace{10pt}}[*4{c@{\hspace{2mm}}}]@{\hspace{1mm}}ll}
f_1 & 1^\bullet  &  & 0^{\phantom\bullet}&     &\s0  \\
f_2 &   &1^{\phantom\bullet}  & & 0^\bullet    &\s0  \\
f_3 & 0^{\phantom\bullet}  &0^\bullet  & &     &\s1  \\
f_4 & \lgo^{\phantom\bullet} &\lgo^{\phantom\bullet}  & 0^\bullet & 0^{\phantom\bullet}    &\s0  \\
\end{block}
 \s{d_j}& \s1\phanbull &\s1\phanbull &\s0\phanbull  &\s0\phanbull \\[-4ex]
 \end{blockarray}
\qquad\qquad \Jac = 
\begin{blockarray}{r *4{c@{\hspace{2mm}}} cc}
&  x'_1\phanbull &   x'_2\phanbull &  x_3\phanbull &  x_4\phanbull \\
\begin{block}{r @{\hspace{10pt}}[*4{c@{\hspace{2mm}}}]cc}
f_1 & -1  &  & 1&       \\
f_2 &   &-1  & & 1      \\
f'_3 & x_2  & x_1  & &\\
f_4 &   &  & x_2 & x_1     \\
\end{block}
& \\[-4ex]
\end{blockarray}
\end{align*}
A shaded entry $\OK{\sij{i}{j}}$ in $\Sigma$ denotes a position $(i,j)$ where $d_j-c_i>\sigma_{ij}\ge 0$ and hence $\Jij{i}{j}\equiv0$ by the formula \rf{sysjac} for $\Jac$. The SA fails here since $\det(\Jac)\equiv0$.

We choose $\vecu = \bigl( x_2, x_1,1,-1\bigr)^T\in\coker\Jac$. Then \rf{LCdef} becomes
\begin{align*}
\eqsetI = \setbg{\,i \mid\, u_i\nequiv 0\,} = \setbg{\rnge{1}{4}},\quad 
\underC = \min_{i\in \eqsetI}c_i=0, \quad
\nzset = \setbg{\, i \in \eqsetI \mid\, c_i=\underC} = \setbg{1,2,4}\;.
\end{align*}
Checking the condition \rf{LCcond} is not difficult; for example, $\hoder{x_1}{\vecu} =0 < 1 = d_1-\underC$.

We pick $\indxk=4\in \nzset$ (we shall reason why this choice is desirable) and replace $f_4$~by 
\begin{align*}
\newf_4 = \sum_{i\in \eqsetI} u_i f_i^{(c_i-\underC)} 
= x_2 f_1 + x_1 f_2 + f'_3 - f_4
=   -x_1-x_2 +g'_1(t) - g_2(t)\;.
\end{align*}
The resulting DAE is $0=(f_1,f_2,f_3,\newf_4)$.
\begin{align*}
\newSig=
\begin{blockarray}{r@{\hspace{10pt}}*4{c@{\hspace{2mm}}}@{\hspace{1mm}} ll}
&  x_1\phanbull &   x_2\phanbull &  x_3\phanbull &  x_4\phanbull & \s{c_i}  \\
\begin{block}{r @{\hspace{10pt}}[*4{c@{\hspace{2mm}}}]@{\hspace{1mm}}ll}
f_1 & 1^{\phantom\bullet}  &  & 0^\bullet&     &\s0  \\
f_2 &   &1^{\phantom\bullet}  & & 0^\bullet    &\s0  \\
f_3 & 0^{\phantom\bullet}  &0^\bullet  & &     &\s1  \\
\newf_4 & 0^\bullet  &0^{\phantom\bullet}  & &     &\s1  \\
\end{block}
\s{d_j}& \s1\phanbull &\s1\phanbull &\s0\phanbull  &\s0\phanbull \\[-4ex]
\end{blockarray}
\qquad\qquad\newJ = 
\begin{blockarray}{r *4{c@{\hspace{2mm}}} cc}
&  x'_1\phanbull &   x'_2\phanbull &  x_3\phanbull &  x_4\phanbull  \\
\begin{block}{r @{\hspace{10pt}}[*4{c@{\hspace{2mm}}}]cc}
f_1 & -1  &  & 1&     \\
f_2 &   &-1  & & 1    \\
f'_3 & x_2 & x_1 & &     \\
\newf'_4 &-1  &-1  & &    \\
\end{block}
&\\[-4ex]
\end{blockarray}
\end{align*}
Now  $\val\newSig=0<1=\val\Sigma$. The SA succeeds at all points where
\[\det(\newJ)=x_2-x_1\neq 0\;.\]
\end{example}

From \rf{LCdef} and \rf{LCnewf}, we can recover the replaced equation $f_l$ by 
\begin{equation}\label{eq:LCnonzero}
\textstyle{f_\indxk = \left( \newf_\indxk - \sum_{i\in\eqsetI\setminus\{\indxk\}} u_i f_i^{(c_i-\underC)}   \right)\big/u_l }  \;.
\end{equation}
Provided $u_l\neq 0$ for all $t$ in the interval of interest, it is not difficult to show that the original DAE and the resulting one have the same solution, if there exists one. From our experience, it is desirable to choose a row index $\indxk\in\nzset$ such that $u_\indxk$ could be an expression that  {\em never} becomes zero. For example, $u_\indxk$ is a nonzero constant, $x_1^2+1$, or $2+\cos x_2$. Such a choice of $\indxk$ guarantees that the resulting DAE is ``equivalent'' to the original DAE, in the sense that they {\em always} have the same solution if there exists one. The reader is referred to  \cite[\S 5.3]{tgn2015aCAS} for details on the equivalence of DAEs.

Since determining whether an expression is identically zero is unsolvable in general \cite{richardson1968}, we consider a (nonzero) constant $u_\indxk$ as the most preferable choice among all $\indxk\in\nzset$, and derive a set $\newL$ that contains all $\indxk$ for such $u_\indxk$: $\newL=\setbg{ \indxk\in\nzset \mid \text{$u_\indxk$ is constant}}$.

We summarize the steps of the LC method.
\begin{enumerate}[1)]
\item Obtain a symbolic form of $\Jac$.
\item Compute a $\vecu\in\coker\Jac$.
\item Derive $\eqsetI$, $\LCconst$, and $\nzset$ as defined in \rf{LCdef}.
\item Check condition \rf{LCcond}. If it is not satisfied, then set $\nzset\as\emptyset$ to mean that the LC method is not applicable; otherwise proceed to the next step.
\item $\newL \as \setbg{ \indxk\in\nzset \mid \text{$u_\indxk$ is constant}}$. If $\newL\neq\emptyset$, then choose an $\indxk\in\newL$; otherwise an $\indxk \in\nzset$.
\item Replace $f_\indxk$ by $\newf_\indxk=\newf$ as defined in \rf{LCnewf}.
\end{enumerate}

The sets $\nzset$ and $\newL$ are used to decide a desirable conversion method; see \TBref{LCES}.

We show below that the LC method cannot fix the following (artificially constructed) DAE \rf{ESexampleDAE} because the condition \rf{LCcond} is not satisfied.
\begin{example}
Consider
\begin{equation}\label{eq:ESexampleDAE}
\begin{aligned}
0 = f_1 &= x_1 + e^{-x_1'-x_2x_2''} + h_1(t) \\
0 = f_2 &= x_1 + x_2x_2' + x_2^2+h_2(t)\;.
\end{aligned}
\end{equation}
\begin{equation*}
\Sigma=
\begin{blockarray}{r@{\hspace{2mm}}c@{\hspace{2mm}}c@{\hspace{2mm}}l}
&  x_1\phanbull &  x_2\phanbull & \s{c_i} \\
\begin{block}{r @{\hspace{2mm}}[c@{\hspace{2mm}}c]@{\hspace{2mm}}l}
f_1 & 1^\bullet & 2^{\phantom\bullet}   &\s0  \\
f_2 & 0^{\phantom\bullet} & 1^\bullet   &\s1  \\
\end{block}
\s{d_j}& \s1\phanbull  &\s2\phanbull  \\[-4ex]
\end{blockarray}
\quad \Jac = 
\begin{blockarray}{r@{\hspace{2mm}}c@{\hspace{2mm}}c}
&  x'_1 &  x''_2   \\
\begin{block}{r @{\hspace{2mm}}[c@{\hspace{2mm}}c]}
f_1 & -\alpha &  -\alpha x_2     \\
f'_2 & 1  &  x_2     \\
\end{block}
\end{blockarray}
\end{equation*}
Here $h_1$ and $h_2$ are given driving functions, and $\alpha=e^{-x_1'-x_2x_2''}$.
Obviously $\det(\Jac)\equiv0$ and the SA fails.

Choose $\vecu = \bigl( \alpha^{-1},1  \bigr)^T=\bigl( e^{x'_1+x_2x''_2},1 \bigr)^T \in\coker\Jac$.
Then \rf{LCdef} becomes
\begin{align*}
\eqsetI = \setbg{\,i \mid\, u_i\nequiv 0\,} = \setbg{1,2},\quad \underC = \min_{i\in \eqsetI}c_i=0, \,\text{ and }\,\nzset = \setbg{\,i\in \eqsetI \mid\, c_i=\underC} = \setbg{1}\;.
\end{align*}
Obviously, $\hoder{x_j}{\vecu}=d_j-\underC$, for $j=1,2$,
violates \rf{LCcond}. Choosing $\indxk=1\in\nzset$ and replacing $f_1$ by 
\[
\newf_1 = u_1 f_1 + u_2 f'_2 = \beta +x'_1+x_2x''_2+(x'_2)^2 +2x_2 x'_2+h'_2(t)
\]
results in the DAE $0=\left(\newf_1, f_2\right)$. 
Here $\beta=\alpha^{-1}(x_1 + h_1(t)) + 1$.
\begin{align*}
\newSig=
\begin{blockarray}{r@{\hspace{10pt}}c@{\hspace{2mm}}cll}
&  \phanbull x_1\phanbull\phanbull &  x_2\phanbull & \s{c_i} \\
\begin{block}{r @{\hspace{10pt}}[c@{\hspace{2mm}}c]ll}
\newf_1 & 1^\bullet & 2^{\phantom{\bullet}}   &\s0  \\
f_2 & 0^{\phantom{\bullet}} & 1^\bullet   &\s1  \\
\end{block}
\s{d_j}& \s1\phanbull  &\s2\phanbull     \\[-4ex]
\end{blockarray}
\qquad\qquad\newJ = 
\begin{blockarray}{rc@{\hspace{2mm}}c cc}
&  x'_1  &  x''_2   \\
\begin{block}{r @{\hspace{10pt}}[c@{\hspace{2mm}}c]cc}
\newf_1 & \beta  & \beta x_2    \\
f'_2 & 1 & x_2   \\
\end{block}
&\\[-4ex]
\end{blockarray}
\end{align*}
The SA fails still, since $\det(\newJac)\equiv0$.
Now $\val\newSig=\val\Sigma=2$.

We shall show in \EXref{ESexam} that the ES method can fix \rf{ESexampleDAE}.
\end{example}

\subsection{Expression substitution method.}\label{ssc:ES}
Let $\vecv=[v_1,\ldots,v_n]^T\nequiv\~0$ be a nonzero $n$-vector function in the kernel of $\Jac$, that is, $\vecv\in\ker\Jac$, or equivalently $\Jac \vecv = \~0$. Denote
\begin{equation}\label{eq:ESdef}
\begin{alignedat}{5}
\nzsetES &= \setbg{\, j\,\mid\, v_j\nequiv 0\,},\quad s = |\nzsetES|\;, \\
\eqsetIES &= \setbg{\, i\,\mid\, d_j-c_i=\sij{i}{j} \text{ for some } j\in \nzsetES },\qaq \overC = \max_{i\in \eqsetIES} c_i\;.
\end{alignedat}
\end{equation}

We choose an $l\in\nzsetES$, and introduce $s-1$ new variables
\begin{align}\label{eq:gjgk}
y_j &= 
x_j^{(d_j-\overC)} - \frac{v_j}{v_\indxk} \cdot x_\indxk^{(d_\indxk-\overC)} \qquad\text{for all $j\in\Jnol$}\;.
\end{align}
In each $f_i$, we
\begin{equation}\label{eq:ESsubs}
\begin{aligned}
\text{replace every $x_j^{(\sij{i}{j})}
= x_j^{(d_j-c_i)}$ with $j\in\Jnol$ by 
$\Big( y_j + \frac{v_j}{v_\indxk}\cdot x_\indxk^{(d_\indxk-\overC)} \Big)^{(\overC-c_i)}$}\;.
\end{aligned}
\end{equation}
From the formula \rf{ESdef} for $\eqsetIES$, these replacements (or substitutions) occur only in $f_i$'s with $i\in\eqsetIES$, because at least one equality $d_j-c_i=\sij{i}{j}$ must hold for some $j\in\nzsetES$. Hence they use the fact that, for such an $x_j^{(\sij{i}{j})}$, $i\in\eqsetIES$ and $j\in\Jnol$,
\[
x_j^{(\sij{i}{j})} = x_j^{(d_j-c_i)} 
= \left(x_j^{(d_j-\overC)}\right)^{(\overC-c_i)} 
= \left( y_j+\frac{v_j}{v_\indxk}\cdot x_\indxk^{(d_\indxk-\overC)}  \right)^{(\overC-c_i)}\;.
\]

After the replacements, denote each equation by $\newf_i$ (for all  $i\notin \eqsetIES$, $\newf_i$ and $f_i$ are  the same).
Equivalent to \rf{gjgk} are $s-1$ equations
\begin{align}\label{eq:gj}
0 = g_j &= -y_j + x_j^{(d_j-\overC)} - \frac{v_j}{v_\indxk}\cdot x_\indxk^{(d_\indxk-\overC)} \quad \text{for all $j\in\Jnol$}
\end{align}
that prescribe the substitutions in \rf{ESsubs}. 
Appending \rf{gj} to the $\newf_i$'s results in an enlarged DAE consisting of 
\begin{alignat*}{5}
\text{equations} &\quad&&\text{$0=\left(\newf_1,\ldots,\newf_n\right)$}\quad&&\text{and}\quad&&0=g_j &&\,\,\,\,\text{for all $j\in\Jnol$} \\
\text{in variables}&\quad&&\phantom{(}x_1,\ldots,x_n \quad&&\text{and}\quad &&y_j &&\,\,\,\,\text{for all $j\in\Jnol$}\;.
\end{alignat*}

The ES method is based on the following theorem.
\begin{theorem}\label{th:ES}
Let $\nzsetES$, $s$, $\eqsetIES$, and $\overC$ be as defined in \rf{ESdef}. Assume that
\begin{equation}\label{eq:EScond}
\begin{aligned}
\hoder{x_j}{\vecv}  \casemod{ll}{
<d_j-\overC &\,\,\, \text{if } j\in \nzsetES \\[1ex]
\le d_j-\overC &\,\,\, \text{otherwise\;,}} \qaq d_j - \overC \ge 0 \quad \text{for all $j\in\nzsetES$}\;. 
\end{aligned}
\end{equation}
For any $\indxk\in\nzsetES$, if we
\begin{enumerate}[1)]
\item introduce $s-1$ new variables $x_j$, $j\in\Jnol$, as defined in \rf{gjgk},
\item perform substitutions in $f_i$, for all $i=\oneton$, by \rf{ESsubs}, and 
\item append $s-1$ equations $g_j$, $j\in\Jnol$, as defined in \rf{gj},
\end{enumerate}
then $\val{\newSig} < \val{\Sigma}$, where $\newSig$ is the signature matrix of the resulting DAE.
\end{theorem}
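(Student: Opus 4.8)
The plan is to bound $\val\newSig$ from above by a feasible offset pair for the enlarged DAE, compute that bound to be exactly $\val\Sigma$, and then show the bound is \emph{not} attained, which forces the strict decrease. Recall the assignment-problem duality behind \rf{cidj}: if nonnegative equation offsets $\widehat c$ and variable offsets $\widehat d$ (now ranging over the enlarged index set) satisfy $\widehat d_j-\widehat c_i\ge\newsij{i}{j}$ for all $i,j$, then $\sum_{(i,j)\in T'}\newsij{i}{j}\le\sum_j\widehat d_j-\sum_i\widehat c_i$ for every transversal $T'$, with equality iff all positions of $T'$ are tight; maximizing over $T'$ gives $\val\newSig\le\sum_j\widehat d_j-\sum_i\widehat c_i$, attained iff some transversal is tight throughout. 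I will use the offsets $\widehat c_{\newf_i}=c_i$, $\widehat c_{g_j}=\overC$, $\widehat d_{x_m}=d_m$, and $\widehat d_{y_j}=\overC$ for $j\in\Jnol$. A direct count then gives $\sum_j\widehat d_j-\sum_i\widehat c_i=\bigl(\sum_m d_m+(s-1)\overC\bigr)-\bigl(\sum_i c_i+(s-1)\overC\bigr)=\sum_m d_m-\sum_i c_i=\val\Sigma$, so everything reduces to verifying feasibility and ruling out a tight transversal.

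The computational heart is the column analogue of \LEref{LCConv}: I claim $\newsij{i}{\indxk}<d_\indxk-c_i$ for every $i=\oneton$. For $i\notin\eqsetIES$ no substitution occurs, $\newf_i=f_i$, and $d_\indxk-c_i>\sij{i}{\indxk}$ by the definition of $\eqsetIES$. For $i\in\eqsetIES$ I would compute $\partial\newf_i/\partial x_\indxk^{(d_\indxk-c_i)}$. The untouched occurrence of $x_\indxk$ in $f_i$ contributes $\Jij{i}{\indxk}$, while each substituted term $x_j^{(\sij{i}{j})}\mapsto\bigl(y_j+\tfrac{v_j}{v_\indxk}x_\indxk^{(d_\indxk-\overC)}\bigr)^{(\overC-c_i)}$ contributes, via the chain rule with $\partial f_i/\partial x_j^{(\sij{i}{j})}=\Jij{i}{j}$ and Griewank's Lemma \rf{griewank} (which, since the coefficient $v_j/v_\indxk$ has $x_\indxk$-order $<d_\indxk-\overC$ by \rf{EScond}, yields $\partial E_j/\partial x_\indxk^{(d_\indxk-c_i)}=v_j/v_\indxk$), the amount $\Jij{i}{j}\,v_j/v_\indxk$. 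Summing, $\partial\newf_i/\partial x_\indxk^{(d_\indxk-c_i)}=v_\indxk^{-1}\sum_{j\in\nzsetES}\Jij{i}{j}v_j=v_\indxk^{-1}(\Jac\vecv)_i=0$, because $v_j\equiv0$ off $\nzsetES$ and $\vecv\in\ker\Jac$. Hence $\newf_i$ does not truly depend on $x_\indxk^{(d_\indxk-c_i)}$, giving the claim.

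Next I would read off the remaining entries of $\newSig$, using \rf{EScond} to bound the orders carried by the coefficients $v_j/v_\indxk$. In $\newf_i$ the substitution removes $x_j^{(\sij{i}{j})}$ and reintroduces $x_j$ only through $v_j/v_\indxk$, so $\newsij{i}{m}<d_m-c_i$ for every $m\in\nzsetES$ (including $m=\indxk$ above) and $\newsij{i}{m}\le d_m-c_i$ for $m\notin\nzsetES$, while $y_j$ enters $\newf_i$ at order exactly $\overC-c_i$ when $\Jij{i}{j}\nequiv0$. In $g_j$ from \rf{gj}, $x_j$ and $x_\indxk$ occur at orders $d_j-\overC$ and $d_\indxk-\overC$, $y_j$ at order $0$, each other $x_m$ ($m\in\Jnol$) at order $<d_m-\overC$, and $x_m$ ($m\notin\nzsetES$) at order $\le d_m-\overC$; the condition $d_j-\overC\ge0$ in \rf{EScond} keeps all orders nonnegative. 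Checking these against the offsets confirms $\widehat d_j-\widehat c_i\ge\newsij{i}{j}$ everywhere, so $\val\newSig\le\val\Sigma$.

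Finally I would force strictness by a Hall/counting argument on the tight positions. By the last two paragraphs every column $x_m$ with $m\in\nzsetES$ is \emph{strictly} slack in all $\newf_i$ rows, so its tight positions lie only among the $g$-rows; there $x_\indxk$ is tight against every $g_j$, whereas each $x_m$ with $m\in\Jnol$ is tight only against $g_m$. Thus the $s$ columns $\setbg{x_m\mid m\in\nzsetES}$ can be tight-matched only into the $s-1$ rows $\setbg{g_j\mid j\in\Jnol}$, which is impossible; structural nonsingularity of $\Jac$ guarantees $s\ge2$ (otherwise column $\indxk$ of $\Jac$ would be identically zero), so the count is genuine. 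Hence no transversal of $\newSig$ is tight throughout, the bound $\val\Sigma$ is not attained, and, the value being integer-valued, $\val\newSig\le\val\Sigma-1<\val\Sigma$. I expect the two delicate points to be the Griewank-based cancellation identifying $\partial\newf_i/\partial x_\indxk^{(d_\indxk-c_i)}$ with $(\Jac\vecv)_i$, and the order-bookkeeping for $v_j/v_\indxk$ under \rf{EScond}, which is needed both for feasibility and for the strict column-slackness that powers the counting argument.
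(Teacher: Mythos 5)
Your proposal is correct and follows essentially the same route as the paper: the natural offsets $(\widehat{\~c};\widehat{\~d})$ give $\val\newSig\le\val\Sigma$, the chain-rule/Griewank computation turning $\partial\newf_i/\partial x_\indxk^{(d_\indxk-c_i)}$ into $v_\indxk^{-1}(\Jac\vecv)_i=0$ is exactly the paper's key step (its Lemma~\ref{le:ESnewSig}, block $\newSigBlk{1}{1}$), and your Hall-type count of the $s$ columns of $\nzsetES$ against the $s-1$ rows $g_j$ is the same strictness argument the paper runs after artificially adjoining $y_\indxk$ and $g_\indxk$. The only difference is presentational --- you work directly with the $(n+s-1)$-square system rather than augmenting and then deleting the extra row and column --- and your explicit remark that structural nonsingularity forces $s\ge2$ is a small point the paper leaves implicit.
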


We call \rf{EScond} the ES conditions, which are again {\em sufficient} for $\val{\newSig} < \val{\Sigma}$.
\begin{example}\label{ex:ESexam}
We illustrate the ES method  on the DAE \rf{ESexampleDAE}.

Suppose we choose $\vecv=[x_2,-1]^T\in\ker\Jac$. Then \rf{ESdef} becomes
\begin{equation*}
\begin{aligned}
\nzsetES =\setbg{1,2},\quad s=|\nzsetES|=2,\quad
\eqsetIES =\setbg{1,2},\quad \text{and}\quad \ESconst=\max_{i\in \eqsetIES} c_i = c_2 = 1\;.
\end{aligned}
\end{equation*}
We can apply the ES method as the conditions \rf{EScond} hold:
\begin{alignat*}{8}
\hoder{x_1}{\vecv} &= &-\infty\ &&\le 1-1-1 &= d_1-\ESconst-1, &\quad d_1 - \ESconst = 1-1  \ge 0\;,\\
\hoder{x_2}{\vecv} &= &0\,\,\, &&\le  2-1-1 &= d_2-\ESconst-1, &\quad d_2 - \ESconst = 2-1  \ge 0\;.
\end{alignat*}

We choose $\indxk=2\in \nzsetES$. Now $\Jnol =\setbg{1}$. Using \rf{gjgk} and \rf{gj}, we introduce for $x_1$ a new variable
\begin{align*}
y_1 = x_1^{(d_1-\ESconst)} - \frac{v_1}{v_2}\cdot x_2^{(d_2-\ESconst)}  = x_1^{(1-1)} - \frac{x_2}{(-1)}\cdot x_2^{(2-1)} = x_1 + x_2x'_2\;,
\end{align*}
and append the equation $0 = g_1 = -y_1 + x_1 + x_2x'_2$.
Then we replace $x'_1$ by $(y_1 - x_2 x'_2)'$ in $f_1$ to obtain $\newf_1$, and replace  $x_1$ by $y_1 - x_2 x'_2$ in $f_2$ to obtain $\newf_2$.
The resulting DAE and its SA results are shown below.
\begin{equation*}
\begin{alignedat}{3}
0 = \newf_1 
&= x_1 + e^{-y'_1 +x'^2_2} + h_1(t) \\
0 = \newf_2 
&= y_1+x_2^2+h_2(t) \\
0 = g_1 &= -y_1 + x_1 + x_2x'_2
\end{alignedat}
\end{equation*}
\begin{equation*}
\newSig =
\begin{blockarray}{r@{\hspace{10pt}}c@{\hspace{2mm}}c@{\hspace{2mm}}c@{\hspace{10pt}} ll}
&  x_1\phanbull &  x_2\phanbull  &  y_1\phanbull  & \hspace{-2pt}\s{c_i} \\
\begin{block}{r @{\hspace{10pt}}[c@{\hspace{2mm}}c@{\hspace{2mm}}c]@{\hspace{10pt}}ll}
\newf_1 & 0\phanbull & 1\phanbull  & 1^\bullet    &\s0  \\
\newf_2 &  & 0^\bullet  & 0\phanbull    &\s1  \\
g_1 & 0^\bullet & 1\phanbull  & \lgo\phanbull     &\s0  \\
\end{block}
\s{d_j} & \s0\phanbull & \s1\phanbull &\s1\phanbull &\\[-4ex]
\end{blockarray}\qquad
\newJ = 
\begin{blockarray}{r@{\hspace{2mm}}c@{\hspace{2mm}}c@{\hspace{2mm}}c@{\hspace{2mm}}c@{\hspace{2mm}}c}
&  x_1\phanbull &  x'_2\phanbull  &  y'_1\phanbull   \\
\begin{block}{r @{\hspace{10pt}}[c@{\hspace{2mm}}c@{\hspace{2mm}}c]cc}
\newf_1 & 1 & 2x'_2\gamma  &  -\gamma   \\
\newf'_2 &  & 2x_2  & 1\\
g_1 & 1 & x_2  &  \\
\end{block}
\end{blockarray}
\end{equation*}
Here $\gamma = e^{-y'_1 +x'^2_2}$. Now $\val\newSig=1<2=\val\Sigma$.
The SA succeeds at all points where
$\det(\newJ)=2\gamma(x_2+x'_2)-x_2\neq 0$.
\end{example}

We prove a lemma related to \THref{ES}, using the following assumptions for the sake of the proof.
\begin{enumerate}[(a)]
\item Without loss of generality, we assume that the entries $v_j\nequiv 0$ are in the first $s$ positions of $\vecv$, that is,
$\vecv = [v_1,\ldots, v_s,0,\ldots ,0]^T$.
Then $\nzsetES= \setrnge{1}{s}$  in \rf{ESdef}.
\item We introduce one more variable $y_\indxk=x_\indxk^{(d_\indxk-\ESconst)}$ for the chosen $\indxk\in\nzsetES$, and append correspondingly one more equation $0=g_\indxk = -y_\indxk + x_\indxk^{(d_\indxk-\ESconst)}$.
\end{enumerate}

\begin{lemma}\label{le:ESnewSig}
Let $(\~c;\~d)=(c_1,\ldots,c_n; d_1,\ldots,d_n)$ be a valid offset pair of $\Sigma$. Let $\~{\newc}$ and $\~{\newd}$ be the two $(n+s)$-vectors defined as
\begin{align}\label{eq:newcdES}
\newd_j = \casemod{ll}
{d_j &\quad\text{if $j=\oneton$}\\[1ex]
\ESconst  &\quad\text{if $j=n+\oneton+s$}} 
\qaq
\newc_i =\casemod{ll}
{c_i &\quad\text{if $i=\oneton$}\\[1ex]
\ESconst &\quad\text{if $i=n+\oneton+s$}\;,}\ 
\end{align}
where $\ESconst$ is as defined \rf{ESdef}. Then the signature matrix $\newSig$ of the resulting DAE from the ES method has the form in \FGref{ESnewSig}.

\begin{figure}[th]
\begin{align*}
\renewcommand{\arraystretch}{1.5}
\begin{blockarray}{r*7{@{\hspace{2mm}}c}@{\hspace{10pt}}l}
& x_1\,\,\,\cdots\,\,\, x_{\indxk-1}  & x_{\indxk}  & x_{\indxk+1}\,\,\,\cdots\,\,\, x_s  & x_{s+1}\,\,\,\cdots\,\,\,x_n & y_1\,\,\,\cdots\,\,\, y_{\indxk-1}   & y_\indxk   & y_{\indxk+1}\,\,\,\cdots\,\,\, y_s \,\,\, & \newc_i \\
\begin{block}{r@{\hspace{2mm}} [ 
@{\hspace{2mm}}l
@{\hspace{2mm}}c
@{\hspace{2mm}}l
@{\hspace{2mm}}|
@{\hspace{2mm}}c
@{\hspace{2mm}}|
@{\hspace{2mm}}c
@{\hspace{2mm}}c
@{\hspace{2mm}}c@{\hspace{-1mm}}]
 @{\hspace{10pt}}c} 
\newf_1 &&&&&&-\infty && c_1     \\
\vdots &   & \HugeLess &  & \HugeLe  & \HugeLe &\vdots & \HugeLe & \vdots\\
\newf_n &&&&&&-\infty & &  c_n \\  
\cline{2-8}
g_1 & \clsp = \qquad \multirow{2}{0.5cm}{\parbox{12pt}{\HugeLess}}   &  =  &\qquad\multirow{3}{0.5cm}{\LargeLess} & \multirow{2}{0.5cm}{\parbox{12pt}{\HugeLe}}& \mc{1}{l}{\phantom{==}$0$} &&&  \ESconst\\
\vdots  & \clsp\phantom{===}\ddots  & \vdots &&& \phantom{==}\ddots & \mc{2}{c}{\HugeInf} & \vdots \\
g_{\indxk} &   & = &  & -\infty \cdots -\infty && 0 && \ESconst\\
\vdots & \qquad\multirow{1}{0.5cm}{\LargeLess} & \vdots  & \clsp\phantom{==}\ddots  & \multirow{2}{0.5cm}{\parbox{12pt}{\HugeLe}} & \HugeInf & & \ddots\phantom{==} & \vdots\\
g_s &  & = & \multirow{-3}{0.5cm}{\LargeLess}  \clsp\phantom{===}=&  &   & & \mc{1}{r}{$0$\phantom{===}} &\ESconst\\ 
\end{block}
\newd_j & d_1\,\,\,\cdots\,\,\, d_{\indxk-1}  & d_{\indxk}  & d_{\indxk+1}\,\,\,\cdots\,\,\, d_s  & d_{s+1}\,\,\,\cdots\,\,\,d_n 
& \ESconst\,\,\,\,\,\,\,\,\cdots\,\,\,\,\,\,\,\, \ESconst\,\,   & \ESconst   & \ESconst\,\,\,\,\,\,\cdots\,\,\,\,\,\, \ESconst\\[-5ex]
\end{blockarray}
\end{align*}
\caption{\label{fg:ESnewSig} The form of $\newSig$ for the resulting DAE by the ES method. The $<$, $\le$, and $=$ mean the relations between $\newsig_{ij}$ and $\newd_j-\newc_i$, respectively. For instance, every $\newsig_{ij}$ whose $(i,j)$ position is in the region marked with ``$\le$'' is $\le\newd_j-\newc_i$.}
\end{figure}
\end{lemma}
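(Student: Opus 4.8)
The plan is to verify the pattern claimed in \FGref{ESnewSig} region by region: for each position I compute the highest-derivative order of the relevant variable in the corresponding equation and compare it with $\newd_j-\newc_i$, using the extended offsets of \rf{newcdES}. I would treat the two row blocks separately---the transformed equations $\newf_1,\ldots,\newf_n$ and the appended equations $g_1,\ldots,g_s$---and within each block split the columns into the groups of \FGref{ESnewSig}: the $x_j$ with $j\in\nzsetES$, the $x_j$ with $j\notin\nzsetES$, the single column $y_\indxk$, and the remaining $y_j$. Three tools recur throughout: Griewank's Lemma (\LEref{griewank}) to track how orders shift under the $\overC-c_i$ differentiations prescribed by \rf{ESsubs}; the ES conditions \rf{EScond} to bound the orders $\hoder{x_m}{\vecv}$ that the coefficient functions $v_j/v_\indxk$ can contribute (noting that forming a quotient cannot raise the highest-derivative order beyond those already present in $v_j$ and $v_\indxk$); and the kernel relation $\Jac\vecv=\~0$ to force the strict entries.

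For the $\newf$-block the essential region is the columns $x_j$ with $j\in\nzsetES$, where I must establish the strict inequality $\newsij{i}{j}<d_j-c_i$. When $i\notin\eqsetIES$ no substitution is performed, and strictness is immediate from the definition of $\eqsetIES$ in \rf{ESdef}, because $\sij{i}{j}\le d_j-c_i$ always holds while $\sij{i}{j}=d_j-c_i$ is excluded for every $j\in\nzsetES$. When $i\in\eqsetIES$ I would argue the cases $j\neq\indxk$ and $j=\indxk$ differently. For $j\in\Jnol$ the substitution \rf{ESsubs} removes the only possible top-order term $x_j^{(d_j-c_i)}$, and every surviving occurrence of $x_j$---whether a lower-order term of $f_i$ or one hidden inside a coefficient $v_m/v_\indxk$---has order below $d_j-c_i$ by \rf{EScond} and Griewank, using $c_i\le\overC$ (which holds because $\overC=\max_{i\in\eqsetIES}c_i$). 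The column $j=\indxk$ is the crux, since the substitutions reintroduce $x_\indxk^{(d_\indxk-c_i)}$ through the terms $(v_j/v_\indxk)\,x_\indxk^{(d_\indxk-\overC)}$. Extracting the leading coefficient by Griewank, the coefficient of $x_\indxk^{(d_\indxk-c_i)}$ in $\newf_i$ is $\Jij{i}{\indxk}+\sum_{j\in\Jnol}(v_j/v_\indxk)\Jij{i}{j}=\tfrac{1}{v_\indxk}\sum_{j\in\nzsetES}v_j\Jij{i}{j}=\tfrac{1}{v_\indxk}(\Jac\vecv)_i=0$, where the missing terms vanish because $v_j\equiv0$ off $\nzsetES$; hence $\newsicj{i}{\indxk}<d_\indxk-c_i$. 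The remaining $\newf$-columns are non-strict: each $x_j$ with $j\notin\nzsetES$ inherits $\sij{i}{j}\le d_j-c_i$ from $f_i$ and gains nothing larger from the coefficients (again by \rf{EScond} and Griewank), giving ``$\le$''; the column $y_\indxk$ is $-\infty$ because $y_\indxk$ never enters any $\newf_i$; and each $y_j$ with $j\in\Jnol$ appears at most as $y_j^{(\overC-c_i)}$, so $\hoder{y_j}{\newf_i}\le\overC-c_i=\newd_{n+j}-\newc_i$, i.e.\ ``$\le$''.

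For the $g$-block I would read the relations directly off the explicit formulas \rf{gj} and the extra equation $g_\indxk=-y_\indxk+x_\indxk^{(d_\indxk-\overC)}$, with offset $\newc_{n+j}=\overC$. In $g_j$ the dominant terms are $x_j^{(d_j-\overC)}$, $(v_j/v_\indxk)\,x_\indxk^{(d_\indxk-\overC)}$, and $-y_j$: since $\hoder{x_j}{\vecv}<d_j-\overC$ and $\hoder{x_\indxk}{\vecv}<d_\indxk-\overC$ by \rf{EScond}, the coefficient function $v_j/v_\indxk$ cannot raise these orders, so that $\newsicj{n+j}{j}=d_j-\overC=\newd_j-\newc_{n+j}$, $\newsicj{n+j}{\indxk}=d_\indxk-\overC=\newd_\indxk-\newc_{n+j}$, and $\newsicj{n+j}{n+j}=0=\newd_{n+j}-\newc_{n+j}$ (``$=$''). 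Every other $x_m$ with $m\in\nzsetES$ enters $g_j$ only through $v_j/v_\indxk$, giving $\newsicj{n+j}{m}<d_m-\overC$ (``$<$''); each $x_m$ with $m\notin\nzsetES$ satisfies $\hoder{x_m}{\vecv}\le d_m-\overC$, giving ``$\le$''; and every $y_m$ with $m\neq j$ is absent, hence $-\infty$. Finally $g_\indxk$ involves only $x_\indxk^{(d_\indxk-\overC)}$ and $y_\indxk$, contributing ``$=$'' in columns $x_\indxk$ and $y_\indxk$ and $-\infty$ elsewhere, matching the last row of the figure.

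I expect the main obstacle to be the column $x_\indxk$ of the $\newf$-block. Isolating the coefficient of the reintroduced highest derivative $x_\indxk^{(d_\indxk-c_i)}$ requires a careful Griewank argument to discard the lower-order $x_\indxk$-contributions carried by the coefficients $v_j/v_\indxk$, together with the uniform bookkeeping that $\Jij{i}{j}$ is the correct coefficient regardless of whether a substitution is actually triggered for a given $j$---it is the genuine partial derivative $\ppin{f_i}{x_j^{(d_j-c_i)}}$ when $\sij{i}{j}=d_j-c_i$ and is identically zero otherwise, so the same sum represents $(\Jac\vecv)_i$ in every case. Once this cancellation against $\Jac\vecv=\~0$ is secured, the rest reduces to routine order-counting, and the strict drop it produces in the $\nzsetES$-columns is precisely the feature \THref{ES} exploits to conclude $\val\newSig<\val\Sigma$.
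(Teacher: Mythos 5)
Your proposal is correct and follows essentially the same route as the paper's proof: the same block decomposition of $\newSig$, the same use of Griewank's Lemma and the ES conditions \rf{EScond} to bound the orders contributed by the coefficients $v_j/v_\indxk$, and the same chain-rule computation showing the coefficient of $x_\indxk^{(d_\indxk-c_i)}$ in $\newf_i$ equals $\tfrac{1}{v_\indxk}(\Jac\vecv)_i=0$, which is the crux in both arguments. The only cosmetic difference is that the paper packages the order bounds on the substituted expression into two preliminary lemmas before the main verification, whereas you carry them inline.
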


The proof of this lemma is rather technical, so we present it in \APref{proof}. Using \LEref{ESnewSig}, we prove \THref{ES}.
\begin{proof}
Let $\newT$ be a HVT of $\newSig$.
By \LEref{ESnewSig},
\begin{alignat*}{3}
\val{\newSig} &= \sum_{(i,j)\in \newT} \newsij{i}{j} \le \sum_{(i,j)\in \newT}(\newd_j-\newc_i) &\qquad& \text{since $\newd_j-\newc_i\ge \newsij{i}{j}$ for all $i,j$} 
\\&=\sum_{j=1}^{n+s} \newd_j - \sum_{i=1}^{n+s} \newc_i
= \sum_{j=1}^n d_j + s\ESconst - \sum_{i=1}^n c_i - s\ESconst &&\text{by \rf{newcdES}}
\\&= \sum_{j=1}^{n} d_j - \sum_{i=1}^{n} c_i = \val{\Sigma}\;.
\end{alignat*}
We assert $\val{\newSig}<\val{\Sigma}$, and show that an equality leads to a contradiction.

Assume that $\val{\newSig}=\val{\Sigma}$. Then there exists a transversal $\newT$  of $\newSig$ such that 
\begin{align}\label{eq:ass2}
\newd_j-\newc_i = \newsig_{ij}>-\infty \qquad\qquad \text{for all $(i,j)\in \newT$}\;.
\end{align}
Consider $(i_1,1), \ldots,  (i_s,s) \in \newT$ for the first $s$ columns. 
Since the $y_\indxk$ column has only one finite entry $\newsig_{n+\indxk,n+\indxk}=0$, position $(n+\indxk,n+\indxk)$ is in $\newT$, and thus row numbers 
$i_1,\ldots,i_s$ can only take values among 
\begin{equation*}
1,\,2,\,\ldots ,\,n,\, n+1,\,\ldots,\, n+\indxk-1,\, n+\indxk+1, \,\ldots ,\,n+s\;.
\end{equation*}
Here only $s-1$ numbers are greater than $n$, so at least one of them is among $\rnge{1}{n}$.
In other words, there exists a position $(r,j)\in\newT$ with $1\le r\le n$ and $1\le j\le s$ in the ``$<$'' region in \FGref{ESnewSig}. 
Hence $\newd_j- \newc_r > \newsij{r}{j}$, which yields a contradiction of \rf{ass2}. Therefore $\val{\newSig}<\val{\Sigma}$.

Finally we remove the $y_\indxk$ column and its matched row $g_\indxk$. The resulting signature matrix still has $\val\newSig$, since $(n+\indxk,n+\indxk)\in\newT$ and  $\newsig_{n+\indxk,n+\indxk}=0$.\qed
\end{proof}

From the steps of applying the ES method, we can recover the original DAE by reverting the expression substitutions and removing the introduced variables $y_j$ and equations $g_j$. Similar to the LC method, the ES method also guarantees that, provided $v_\indxk\neq0$ for all $t$ in the interval of interest, the original DAE and the resulting one have (at least locally) the same solution (if there is one); this is shown in \cite[\S 6.3]{tgn2015aCAS}.  
It is again desirable to choose a column index $\indxk\in\nzsetES$, such that the $v_l$ is a (nonzero) constant. With this choice, the equivalence of the original DAE and the resulting one is {\em always} guaranteed. We hence derive a set $\newJes$, a subset of $\nzsetES$ that contains these $\indxk$'s for which $\indxk\in\nzsetES$  and $v_\indxk$ is constant.

\medskip

We summarize the steps of the ES method.
\begin{enumerate}[1)]
\item Obtain a symbolic form of $\Jac$.
\item Compute a $\vecv\in\ker\Jac$.
\item Derive $\nzsetES$,  $s$, $\eqsetIES$, $\overC$ as defined in \rf{ESdef}.
\item Check condition \rf{EScond}. If it is not satisfied, then set $\nzsetES\as\emptyset$ to mean that the ES method is not applicable; otherwise proceed to the next step.
\item $\newJes \as \setbg{ \indxk\in\nzsetES \mid \text{$v_\indxk$ is constant}}$. If $\newJes\neq\emptyset$, then choose an $\indxk\in\newJes$; otherwise an $\indxk \in\nzsetES$.
\item For each $j\in\Jnol$, introduce $y_j$, as defined in \rf{gjgk}, and append the corresponding equation $g_j$, as defined in \rf{gj}.
\item Replace each $x_j^{(d_j-c_i)}$ in $f_i$ by $\big( y_j+(\fracin{v_j}{v_\indxk}) \cdot x_\indxk^{(d_\indxk-\ESconst)} \big)^{(\ESconst-c_i)}$, for all $i\in\eqsetIES$ and all $j\in\Jnol$.
\item (Optional) For consistence, rename variables $y_j$, $j\in\Jnol$, to $x_{n+1},\ldots,x_{n+s-1}$, and rename equations $g_j$, $j\in\Jnol$, to $f_{n+1},\ldots,f_{n+s-1}$.
\end{enumerate}

The sets $\nzsetES$ and $\newJes$ are used to decide a desirable conversion method; see below.

\subsection{Which method to choose?}

We present our rationale for choosing a conversion method in \TBref{LCES} and base our choice on the following observations. For some failure cases, our experiments find that usually one of the LC condition \rf{LCcond} and the ES condition \rf{EScond} is satisfied, while the other is not, so we can apply one conversion method only. For other cases where both methods are applicable, we consider as priority the equivalence between the original DAE and the resulting one. As discussed in \SSCref{LC} and \SSCref{ES}, we wish to choose a nonzero constant $u_l$ [resp. $v_l$] for the LC [resp. ES] method. Our experience suggests that such a constant frequently exists for one of the methods. If both methods guarantee equivalence or neither of them does, then we choose the LC method, as it replaces only one existing equation and maintains the problem size. 

We summarize in \TBref{LCES} the above logic of finding the desirable conversion in the sense of equivalence. For instance, suppose the LC method finds $\newL=\emptyset$ and $\nzset\neq\emptyset$ while the ES method finds $\newJes\neq\emptyset$. Then either method can provide some conversion for reducing $\val\Sigma$. However, the LC method does not guarantee the equivalence while the ES method does, so we choose the ES method with a column index $\indxk\in\newJes$.
\begin{table}[!htb]
{\normalsize
\centering
\begin{tabular}{rc|ccc}
\mc{2}{c|}{\multirow{2}{*}{
}}& \mc{3}{c}{ES method}\\
& & $\newJes\neq\emptyset$ & $\newJes=\emptyset$ and $\nzsetES\neq\emptyset$  & $\nzsetES=\emptyset$ \\ \hline &&&\\[-2ex] 
\multirow{3}{*}{LC method} 
& $\newL\neq\emptyset$ & LC & LC & LC \\
& $\newL=\emptyset$ and $\nzset\neq\emptyset$ & ES & LC & LC \\
& $\nzset=\emptyset$ & ES & ES & --\\
\end{tabular}\caption{Rationale for choosing a conversion method. 
}\label{tb:LCES}
}
\end{table}

\section{More examples.}\label{sc:exam}
We show how to iterate the LC method on a linear constant coefficient DAE in \SSCref{coupledDAE}, and illustrate in \SSCref{pendES} the ES method with a modified pendulum problem by a linear transformation of the state variables.

\subsection{A linear constant coefficient DAE.}\label{ssc:coupledDAE}

Consider a linear constant coefficient DAE \cite[Example 10]{scholz2013combined}\footnote{
We consider it with parameters $\beta=\epsilon=1$, $\alpha_1=\alpha_2=\delta=1$, and $\gamma=-1$. Superscripts are used as indices there, while we use subscripts instead. The equations $g_1,g_2$ are renamed $f_3, f_4$, and the variables $y_1,y_2$ are renamed $x_3, x_4$.}  by Scholz and Steinbrecher
\begin{equation*}
\begin{alignedat}{3}
0=f_1 &= -&&x'_1 + x_3 &&- b_1\\
0=f_2 &= -&&x'_2 + x_4 &&- b_2
\end{alignedat}
\qquad\qquad
\begin{alignedat}{3}
0=f_3 &= &&x_2 + x_3 + x_4 &&- b_3\\
0=f_4 &= -&&x_1+ x_3 + x_4 &&- b_4\;.
\end{alignedat}
\end{equation*}

\begin{equation*}
{
\Siter{0} = 
\begin{blockarray}{r*4{@{\hspace{2mm}}c}@{\hspace{10pt}}ll}
 & \phanbull x_{1} & x_{2} & x_{3} & x_{4} &\s{c_i} \\
\begin{block}{r @{\hspace{2mm}}[*4{@{\hspace{2mm}}c}]@{\hspace{10pt}}ll}
 f_{1}&\phanbull1^\bullet&&0&&\s0\\ 
 f_{2}&&\phanbull1^\bullet&&0&\s0\\ 
 f_{3}&&\OK{0}&\phanbull0^\bullet&0&\s0\\ 
 f_{4}&\OK{0}&&0&\phanbull0^\bullet&\s0\\ 
\end{block}
\s{d_j} &\s1&\s1&\s0&\s0 \\[-4ex]
 \end{blockarray}
 \qquad
\Jiter{0} = \begin{blockarray}{r@{\hspace{2mm}}*4{@{\hspace{2mm}}c}}
 & x'_{1} & x'_{2} & x_{3} & x_{4} \\
\begin{block}{r@{\hspace{2mm}}[*4{@{\hspace{2mm}}c}]}
 f_{1}&-1&&1&\\ 
 f_{2}&&-1&&1\\ 
 f_{3}&&&1&1\\ 
 f_{4}&&&1&1\\ 
\end{block}
&\\[-4ex]
\end{blockarray}}
\end{equation*}
In each $f_i$ is a forcing function $b_i(t)$, $i=\rnge{1}{4}$.
We use a superscript in $\Siter{0}$ and $\Jiter{0}$ to mean an iteration number, not a power.
Since $\det(\Jiter{0})\equiv0$, the SA fails. 

\renewcommand\uiter[1]{\vecu}
\renewcommand\Kiter[1]{\nzset}
\renewcommand\kiter[1]{\indxk}
\renewcommand\thiter[1]{\LCconst}
\renewcommand\Iiter[1]{\eqsetI}

Choose  $\uiter{0}=[0,0,-1,1]^T\in\coker{\Jiter{0}}$.  Using \rf{LCdef} gives
\[
\Iiter{0}=\setbg{3,4},\quad \thiter{0}=0,\quad\text{and}\quad \Kiter{0}=\setbg{3,4}\;. 
\]
Since $\vecu$ is a constant vector, the condition \rf{LCcond} is surely satisfied, as $\hoder{x_j}{\vecu}=-\infty$ for all $j$. Choosing $\kiter{0}=3\in\Kiter{0}$ and replacing $f_3$ by $\newf_3$ results in $0=(f_1,f_2,\newf_3,f_4)$,~where
\begin{align*}
\newf_3 = \sum_{i\in \eqsetI} u_i f_i^{\left(c_i-\LCconst\right)}
=  -f_3+f_4 = -x_1 -x_2 + b_3 - b_4.\\[-4ex]
\end{align*}
\begin{align*}
\Siter{1}=
\begin{blockarray}{rc*3{@{\hspace{2mm}}c}ll}
 & x_{1}\phanbull & x_{2} & x_{3} & x_{4} &\s{c_i} \\
\begin{block}{r [c*3{@{\hspace{2mm}}c}]ll}
f_1 & 1^\bullet  &  &0\phanbull  &  &   \s0 \\ 
f_2 &  &1\phanbull  &  &0^\bullet &   \s0  \\ 
\newf_3 & 0\phanbull  &0^\bullet &  && \s1  \\
f_4 & \lgo\phanbull &  & 0^\bullet &0\phanbull & \s0  \\
\end{block}
 \s{d_j}& \s1\phanbull &\s1\phanbull &\s0\phanbull &\s0\phanbull \\[-4ex]
 \end{blockarray}
\qquad\Jiter{1} = \begin{blockarray}{rc*3{@{\hspace{2mm}}c}@{\hspace{10pt}}ll}
 & x'_{1} & x'_{2} & x_{3} & x_{4} \\
\begin{block}{r @{\hspace{2mm}}[c*3{@{\hspace{2mm}}c}]@{\hspace{10pt}}ll}
f_1 & -1  &  &1  &  &    \\ 
f_2 &   & -1  &  &1 &    \\ 
\newf'_3 &-1   &-1 &  &  &   \\
f_4 &  &  & 1 & 1 &   \\
\end{block}
&\\[-4ex]
\end{blockarray}
\end{align*}

The SA fails still since $\det(\Jiter{1})\equiv0$. We then apply the LC method again by choosing $\uiter{1}=[-1,-1,1,1]^T\in \coker{\Jiter{1}}$. This gives
\[
\Iiter{1}=\setbg{1,2,3,4},\quad \thiter{1}=0,\quad\text{and}\quad \Kiter{1}=\setbg{1,2,4}\;.
\]
Choosing $\kiter{1}=1\in\Kiter{1}$ and replacing $f_1$ by
\begin{align*}
\newf_1  =  -f_1-f_2+\newf'_3 +f_4
=-x_1+b_1+b_2+b'_3-b'_4-b_4
\end{align*}
results in $0=(\newf_1,f_2,\newf_3,f_4)$.
\begin{align*}
\Siter{2} =
\begin{blockarray}{rc*3{@{\hspace{2mm}}c}ll}
& x_1 &  x_2 &  x_3 & x_4 & \s{\newc_i} \\
\begin{block}{r [c*3{@{\hspace{2mm}}c}]ll}
\newf_1 & 0^\bullet &  &  && \s1\\
f_2 &   &1\phanbull  &  &0^\bullet &   \s0  \\
\newf_3 &  0\phanbull & 0^\bullet &  && \s1 \\
f_4 & \lgo\phanbull  &  &0^\bullet  & 0\phanbull &   \s0\\ 
\end{block}
\s{\newd_j}&\s1\phanbull &\s1\phanbull & \s0\phanbull &\s0\phanbull  \\[-4ex]
\end{blockarray}
\qquad
\Jiter{2} = \begin{blockarray}{rc*3{@{\hspace{2mm}}c}@{\hspace{10pt}}ll}
& x'_1 &  x'_2 &  x_3 & x_4 &  \\
\begin{block}{r [c*3{@{\hspace{2mm}}c}]@{\hspace{10pt}}ll}
\newf'_1 &-1  &  &  &    \\
f_2 &  & -1  &  & 1    \\ 
\newf'_3 & -1  & -1 &  &   \\
f_4 &  & & 1 & 1     \\ 
\end{block}
&\\[-4ex]
\end{blockarray}
\end{align*}
The SA succeeds since $\detJac{\Jiter{2}}{1}$. Note $\valSig{\Siter{2}}{0} < \valSig{\Siter{1}}{1} < \valSig{\Siter{0}}{2}$.

\subsection{Modified pendulum by change of variables.}\label{ssc:pendES}
For the pendulum DAE \rf{pend}, we perform a linear transformation on  $x,y,\lam$:
\begin{align*}
\left[
\begin{array}{c}
x\\y\\ \lam
\end{array}
\right] = \left[
\begin{array}{ccc}
1 & 1 & 0\\ 0 & 1 & 1\\ 1 & 0 & 1
\end{array}
\right] \left[
\begin{array}{c}
x_1\\ x_2\\ x_3
\end{array}
\right]\;.
\end{align*} 
The resulting problem is
\begin{equation}\label{eq:pendabc}
\begin{aligned}
0 = f_1 &= (x_1+x_2)''+(x_1+x_2)(x_3+x_1)\\
0 = f_2 &= (x_2+x_3)''+(x_2+x_3)(x_3+x_1) -\pendG\\
0 = f_3 &= (x_1+x_2)^2+(x_2+x_3)^2-\pendL^2\;.\\[-2ex]
\end{aligned}
\end{equation}
\begin{align*}
\Sigma=
\begin{blockarray}{rc@{\hspace{2mm}}c@{\hspace{2mm}}c ll}
& x_1 &  x_2 &  x_3 &  \s{c_i} \\
\begin{block}{r @{\hspace{10pt}}[c@{\hspace{2mm}}c@{\hspace{2mm}}c]ll}
f_1 & 2  & 2 & \OK{0}   &   \s0 \\
f_2 & \OK{0} &2  &2 &   \s0  \\ 
f_3 & 0  &0 & 0 & \s2  \\
\end{block}
 \s{d_j}& \s2 &\s2 &\s2 \\[-4ex]
 \end{blockarray}
\qquad\Jac = \begin{blockarray}{rc@{\hspace{2mm}}c@{\hspace{2mm}}ccc}
& x''_1 &  x''_2 & x''_3 \\
\begin{block}{r @{\hspace{10pt}}[c@{\hspace{2mm}}c@{\hspace{2mm}}c]cc}
f_1 & 1    &1    &      \\
f_2 &  &1    &1      \\
f''_3 & 2\alpha  &2(\alpha+\beta)    &2\beta    \\  
\end{block}
\end{blockarray}
\end{align*}
Here $\alpha=x_1+x_2$ and $\beta=x_2+x_3$. That $\det(\Jac)\equiv0$ is expected.

We first attempt the LC method and compute $\vecu=[2\alpha,2\beta,-1]^T\in\coker\Jac$. Using \rf{LCdef} it finds
\[
\eqsetI=\setbg{\,i \mid\, u_i\nequiv 0\,}=\setbg{\rnge{1}{3}},\quad 
\LCconst=\min_{i\in \eqsetI}c_i=0,\quad
\nzset=\setbg{\, i \in \eqsetI \mid\, c_i=\underC} =\setbg{1,2}\;.
\]
For all $\indxk\in\nzset$, $u_\indxk$ is not a constant, so $\nzset\neq\emptyset$ and $\newL=\emptyset$.  Then we try the ES method to seek a conversion that guarantees equivalence.

We show below how the ES method reveals the linear transformation of the states without having knowledge about the equations.

Compute $\vecv=[1,-1,1]^T\in\ker\Jac$ and, using \rf{ESdef}, find
\begin{align*}
\nzsetES &=\setbg{1,2,3},\quad s=|\nzsetES|=3,\quad \eqsetIES=\setbg{1,2,3},\quad \text{and} \quad \ESconst = 2\;.
\end{align*}
Since
\[
\hoder{x_j}{\vecv}=-\infty \qaq d_j-\ESconst=0 \quad\text{for all $j$}\;,
\]
the ES condition \rf{EScond} is satisfied, and we have $\newJes=\nzsetES=\setbg{1,2,3}$. Below shows the case $\indxk=1\in\newJes$ for example.

As $\nzsetES\setminus \setbg{\indxk} =\setbg{2,3}$, we introduce new variables $y_2$ and $y_3$ for $x_2$ and $x_3$, respectively. Then \rf{gjgk} becomes
\begin{align*}
y_2 &= x_2^{(d_2-\ESconst)}  - (\fracin{v_2}{v_1})\cdot x_1^{(d_1-\ESconst)} 
= x_2+x_1  \\
\text{and}\quad
y_3 &= x_3^{(d_3-\ESconst)} - (\fracin{v_3}{v_1})\cdot x_1^{(d_1-\ESconst)} 
= x_3-x_1\;.
\end{align*}
The corresponding two equations are
\begin{equation*}
0 = g_2 = -y_2 + x_2+x_1 \qaq 0 = g_3 = -y_3 + x_3 - x_1\;.
\end{equation*}

We first  write {\em explicitly} the derivatives $x''_1,\ x''_2$ and $x''_3$ in $f_1$ and $f_2$:
\begin{align*}
0 = f_1 = x''_1+x''_2 + (x_1+x_2)(x_3+x_1),\quad
0 = f_2 = x''_2+x''_3 + (x_2+x_3)(x_3+x_1) -\pendG\;.
\end{align*}
Then we perform expression substitutions as described in the table.
\[
  \begin{array}{l@{\hspace{5mm}}l@{\hspace{5mm}}l}
 \text{substitute} & \text{for} &\text{in}  \\ \hline
y''_2-x''_1 & x''_2 & f_1,\; f_2\\
y''_3+x''_1 & x''_3 & f_2\\
y_2-x_1 & x_2 & f_3\\
y_3+x_1 & x_3 & f_3
  \end{array}
\]

One may want to make the variable names consistent and do the same for the equation names. By Step 8 of the ES method, variables $y_2,y_3$ are renamed $x_4,x_5$, while equations $g_2,g_3$ are renamed $\newf_4,\newf_5$. The resulting DAE is
\begin{equation}\label{eq:pendabcES}
\begin{alignedat}{3}
0 = \newf_1 &= x_4''+x_4(2x_1+x_5)\\
0 = \newf_2 &= (x_4+x_5)''+(x_4+x_5)(2x_1+x_5) -\pendG\\
0 = \newf_3 &= x_4^2+(x_4+x_5)^2-\pendL^2
\end{alignedat}
\qquad
\begin{alignedat}{3}
0 = \newf_4 &= -x_4+x_2+x_1 \\
0 = \newf_5 &= -x_5+x_3+x_1\;. \\
&
\end{alignedat}
\end{equation}
\begin{align*}
\newSig = \begin{blockarray}{r@{\hspace{10pt}}c*4{@{\hspace{2mm}}c}ll}
 & x_{1}\phanbull & x_{2} & x_{3} & x_{4} & x_{5} & \s{c_i} \\
\begin{block}{r @{\hspace{10pt}}[c*4{@{\hspace{2mm}}c}]ll}
 \newf_{1}&0\phanbull &&&2^\bullet&\OK{0}\phanbull&\s0\\ 
 \newf_{2}&0^\bullet&&&2\phanbull &2\phanbull &\s0\\ 
 \newf_{3}&&&&0\phanbull &0^\bullet&\s2\\ 
 \newf_{4}&0\phanbull &0^\bullet&&\OK{0}\phanbull&&\s0\\ 
 \newf_{5}&0\phanbull &&0^\bullet&&\OK{0}\phanbull&\s0\\ 
\end{block}
\s{d_j} &\s0\phanbull &\s0\phanbull &\s0\phanbull &\s2\phanbull &\s2\phanbull \\[-4ex]
 \end{blockarray}
 \qquad
\newJ = \begin{blockarray}{r*5{c@{\hspace{2mm}}}cc}
 & x_{1} & x_{2} & x_{3} & x''_{4} & x''_{5} \\
\begin{block}{r @{\hspace{10pt}}[*5{c@{\hspace{2mm}}}]cc}
 \newf_{1}&2x_4&&&1&\\ 
 \newf_{2}&2\mu &&&1&1\\ 
 \newf''_{3}&&&&2(x_4+\mu) &2\mu \\ 
 \newf_{4}&-1&-1&&&\\ 
 \newf_{5}&1&&-1&&\\ 
\end{block}
\end{blockarray}
\end{align*}
Here $\mu=x_4+x_5$. The SA succeeds on \rf{pendabcES}, since by $\newf_3=0$, we have
\[
\det(\newJ)=-4(2x_4^2+2x_4 x_5+x_5^2)=-4\pendL^2\neq 0\;.
\]

\section{Conclusions.}\label{sc:conclu}

We proposed two conversion methods for improving the \Sigmeth. They convert a DAE with finite $\val{\Sigma}$ and an identically (but not structurally) singular System Jacobian to a DAE that may have a nonsingular System Jacobian. A conversion guarantees that both DAEs have equivalent solutions (if any).
The conditions for applying these methods can be checked automatically, and the main result of a conversion is $\val{\newSigma}<\val{\Sigma}$, where $\newSig$ is the signature matrix of the resulting DAE.

An implementation of these methods requires the following steps.
\begin{enumerate}[1)]
\item  Compute a symbolic form of a System Jacobian $\Jac$.
\item  Find a vector in $\coker\Jac$ [respectively $\ker\Jac$].
\item  Check the LC condition \rf{LCcond} [respectively ES conditions \rf{EScond}].
\item  Generate the equations for the resulting DAE.
\end{enumerate}

In general, the computational cost of a conversion depends on the size of the DAE, sparsity, and intricacy of the equations. Determining the cost in advance is undecidable due to the results in \cite{richardson1968}. For example, fixing $\~M\~f=0$ can be arbitrarily difficult, where $\~f=0$ is a solvable DAE and $\~M$ is a generically nonsingular dense $n\times n$ matrix that can contain any derivatives of the $x_j$'s, typically lower than the $d_j$th. So far, all the fixes we have found for those failure cases are not difficult to compute.

In \cite{tgn2015d}, a continuation of this paper, we shall show how to combine the conversion methods with block triangularization of a DAE. For DAEs whose $\Jac$ can be permuted into a block-triangular form \cite{NedialkovPryce2012a, pryce2014btf}, we can locate the diagonal blocks that are singular and then apply a conversion to each such block, instead of working on the whole DAE. This approach improves the efficiency of finding a useful conversion for reducing $\val\Sigma$. Using our block conversion methods, we shall show the remedies for the Campbell-Griepentrog robot arm in \cite{campbell1995solvability}, and the transistor amplifier and the ring modulator in \cite{TestSetIVP}.

\appendix
\section{Preliminary results and proof of \protect\LEref{ESnewSig}.}\label{ap:proof}

{

Let the notation be as at the start of \SSCref{ES}. We give two preliminary lemmas prior to the main proof of \LEref{ESnewSig}.

\begin{lemma}\label{le:prelim1}
Let $r\in\Jnol$ and 
$\omega_1=y_\jtwo+(\fracin{v_\jtwo}{v_\indxk})\cdot x_\indxk^{(d_\indxk-\ESconst)}$.
Then
\begin{equation}\label{eq:prelim1}
\hoder{x_\jone}{\omega_1}=
\casemod{ll}{
<   d_j-\ESconst& \quad\text{if $j\in \Jnol$}\\[1ex]
\le d_j-\ESconst & \quad\text{otherwise}\;.
}   
\end{equation}
\end{lemma}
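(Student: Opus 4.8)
The plan is to decompose $\omega_1 = y_\jtwo + (\fracin{v_\jtwo}{v_\indxk})\,x_\indxk^{(d_\indxk-\ESconst)}$ into its two summands, bound the order to which each original variable $x_\jone$, $\jone=\oneton$, occurs in each summand, and then recombine using the elementary facts that for any expressions $a,b$ one has $\hoder{x_\jone}{a+b}\le\max\setbg{\hoder{x_\jone}{a},\hoder{x_\jone}{b}}$ and likewise for products and quotients. First I would observe that $y_\jtwo$ is a freshly introduced variable, so no original $x_\jone$ occurs in it and $\hoder{x_\jone}{y_\jtwo}=-\infty$; hence the order of $x_\jone$ in $\omega_1$ is governed entirely by the product term.

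For that product I would bound each factor separately. Since $v_\jtwo$ and $v_\indxk$ are components of $\vecv$, each satisfies $\hoder{x_\jone}{v_\jtwo},\hoder{x_\jone}{v_\indxk}\le\hoder{x_\jone}{\vecv}$, whence $\hoder{x_\jone}{\fracin{v_\jtwo}{v_\indxk}}\le\hoder{x_\jone}{\vecv}$ as well. The remaining factor contributes $\hoder{x_\indxk}{x_\indxk^{(d_\indxk-\ESconst)}}=d_\indxk-\ESconst$ when $\jone=\indxk$ and $-\infty$ otherwise. Combining the two summands then gives $\hoder{x_\jone}{\omega_1}\le\max\setbg{\hoder{x_\jone}{\vecv},\,\hoder{x_\jone}{x_\indxk^{(d_\indxk-\ESconst)}}}$.

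I would then split into three cases driven by the ES condition \rf{EScond}. If $\jone\in\Jnol$, then $\jone\in\nzsetES$ gives $\hoder{x_\jone}{\vecv}<d_\jone-\ESconst$, and since $\jone\neq\indxk$ the second factor contributes $-\infty$; both bounds lie strictly below $d_\jone-\ESconst$, yielding the strict inequality. If $\jone=\indxk$ (which falls in the ``otherwise'' branch, as $\indxk\notin\Jnol$), the factor $v_\jtwo/v_\indxk$ still contributes $<d_\indxk-\ESconst$, but the factor $x_\indxk^{(d_\indxk-\ESconst)}$ contributes exactly $d_\indxk-\ESconst$, so $\hoder{x_\indxk}{\omega_1}\le d_\indxk-\ESconst$. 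Finally, if $\jone\notin\nzsetES$, the ES condition gives $\hoder{x_\jone}{\vecv}\le d_\jone-\ESconst$ while the second factor contributes $-\infty$ (as $\jone\neq\indxk$), again giving $\le d_\jone-\ESconst$.

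I expect the only delicate point to be the boundary case $\jone=\indxk$: it is the sole place where the inequality is not strict, and it is precisely why the conclusion separates a strict bound on $\Jnol$ from a non-strict bound elsewhere. The remainder is routine bookkeeping with the monotonicity of $\hoder{\cdot}{\cdot}$ under sums, products, and quotients together with a careful reading of the two branches of \rf{EScond}.
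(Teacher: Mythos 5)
Your proposal is correct and takes essentially the same route as the paper's proof: both reduce the bound to $\hoder{x_\jone}{\omega_1}\le\hoder{x_\jone}{\vecv}$ for $\jone\neq\indxk$ (since $x_\jone$ can then enter $\omega_1$ only through $v_\jtwo$ and $v_\indxk$), treat $\jone=\indxk$ separately via the explicit occurrence of $x_\indxk^{(d_\indxk-\ESconst)}$, and then invoke the two branches of \rf{EScond}. The only cosmetic difference is that the paper asserts equality $\hoder{x_\indxk}{\omega_1}=d_\indxk-\ESconst$ in the $\jone=\indxk$ case while you establish only $\le$, which is all the lemma requires.
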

\begin{proof}
Consider the case $\jone=\indxk\in\nzsetES$. Obviously $\hoder{x_\indxk}{\omega_1}=d_\indxk-\ESconst$.

Now consider the case $\jone\neq\indxk$. Since $x_\jone$ can occur only in $v_\jtwo$ and $v_\indxk$ in $\omega_1$, we have $\hoder{x_\jone}{\omega_1} \le \hoder{x_\jone}{\vecv}$. It follows from \rf{EScond} and the case $\jone=\indxk$ that \rf{prelim1} holds.\qed

\end{proof}

\begin{lemma}\label{co:prelim2}
Let $r\in\Jnol$, $i\in\eqsetIES$, and 
\begin{equation}\label{eq:omega2}
\omega_2 
= \omega_1^{(\ESconst-c_i)}
= \left(y_\jtwo+\frac{v_\jtwo}{v_\indxk}\cdot x_\indxk^{(d_\indxk-\ESconst)}\right)^{(\ESconst-c_i)}\;.
\end{equation}
 Then
\begin{equation}\label{eq:prelim2}
\hoder{x_\jone}{\omega_2}=
\casemod{ll}{
<   d_j-c_i& \quad\text{if $j\in \Jnol$}\\[1ex]
\le d_j-c_i & \quad\text{otherwise}\;.
}   
\end{equation}
\end{lemma}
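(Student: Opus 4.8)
The plan is to derive the bounds on $\hoder{x_\jone}{\omega_2}$ directly from those on $\hoder{x_\jone}{\omega_1}$ supplied by \LEref{prelim1}, using only the observation that $\omega_2=\omega_1^{(\ESconst-c_i)}$ is a $t$-derivative of $\omega_1$ of order $\ESconst-c_i$, together with the elementary fact that differentiating in $t$ raises the highest-order derivative of each state variable by at most one per differentiation. In effect the lemma is just \LEref{prelim1} shifted forward through $\ESconst-c_i$ differentiations.

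First I would record the bound: for any function $w$ of $t$, the $x_\jone$ and their derivatives, and any integer $p\ge0$,
\[ \hoder{x_\jone}{w^{(p)}} \le \hoder{x_\jone}{w}+p. \]
This follows from the chain rule, since one application of $\mathrm d/\mathrm dt$ contributes, for each variable, a leading term $(\partial w/\partial x_\jone^{(q)})\,x_\jone^{(q+1)}$ with $q=\hoder{x_\jone}{w}$ and no term of higher order in $x_\jone$; iterating $p$ times gives the claim, and when $x_\jone$ is absent from $w$ both sides are $-\infty$. Only this one-sided bound is needed, so any cancellations occurring under differentiation are harmless and no genericity hypothesis enters.

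Next I would note that $p:=\ESconst-c_i\ge0$, because $i\in\eqsetIES$ and $\ESconst=\max_{i\in\eqsetIES}c_i$ by \rf{ESdef}; hence the bound applies with $w=\omega_1$. Combining it with \rf{prelim1}, for $\jone\in\Jnol$ we have $\hoder{x_\jone}{\omega_1}<d_\jone-\ESconst$, i.e.\ $\le d_\jone-\ESconst-1$, so
\[ \hoder{x_\jone}{\omega_2}\le (d_\jone-\ESconst-1)+(\ESconst-c_i)=d_\jone-c_i-1<d_\jone-c_i, \]
while for $\jone\notin\Jnol$ we have $\hoder{x_\jone}{\omega_1}\le d_\jone-\ESconst$, so
\[ \hoder{x_\jone}{\omega_2}\le (d_\jone-\ESconst)+(\ESconst-c_i)=d_\jone-c_i. \]
These are precisely the two cases of \rf{prelim2}.

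I do not anticipate a substantive obstacle: the content is entirely carried by \LEref{prelim1} and the additivity of the highest-order derivative under differentiation. The only point needing care is that $\hoder{\cdot}{\cdot}$ in \rf{truehod} denotes the \emph{true} highest-order-derivative function; but since the argument uses solely the inequality $\hoder{x_\jone}{w^{(p)}}\le\hoder{x_\jone}{w}+p$, it is insensitive to whether each differentiation actually attains the raised order, and the reasoning goes through without any genericity assumption.
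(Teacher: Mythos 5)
Your proof is correct and follows essentially the same route as the paper: both deduce \rf{prelim2} by combining \LEref{prelim1} with the observation that $\ESconst-c_i\ge0$ and that differentiating $\omega_1$ that many times shifts the highest-order derivative of each $x_\jone$ by $\ESconst-c_i$. The only (harmless) difference is that you use the one-sided bound $\hoder{x_\jone}{\omega_1^{(p)}}\le\hoder{x_\jone}{\omega_1}+p$ where the paper asserts equality; since the conclusion is an upper bound in both cases, this changes nothing.
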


\begin{proof}
Since $\ESconst=\max_{i\in \eqsetIES} c_i$, we have $\ESconst-c_i\ge 0$ for all $i\in \eqsetIES$. From \rf{omega2}, connecting $\hoder{x_j}{\omega_2}=\hoder{x_j}{\omega_1}+(\ESconst-c_i)$ to \rf{prelim1} immediately yields \rf{prelim2}.\qed
\end{proof}

Using the two assumptions before \LEref{ESnewSig}, we prove it below. 

\begin{proof}
Write $\newSig$ in \FGref{ESnewSig} into the following $2\times 3$ block form:
\begin{align}
\newSig = \left[
\renewcommand{\arraystretch}{1.5}
\begin{array}{c|c|c}
\newSigBlk{1}{1} & \,\newSigBlk{1}{2} & \,\newSigBlk{1}{3} \\ \cline{1-3}
\newSigBlk{2}{1} & \,\newSigBlk{2}{2} & \,\newSigBlk{2}{3}
\end{array}
\right]\;.
\end{align}
We aim to verify below the relations between $\newsij{i}{j}$ and $\newd_j-\newc_i$ in each block. 

\begin{enumerate}[(1)]
\item $\newSigBlk{1}{1}$. Consider $\jone,\jtwo\in\Jnol$. By \rf{ESsubs}, we substitute $\omega_2$ in \rf{omega2} for every $x_\jtwo^{(d_\jtwo-c_i)}$ in $f_i$ for all $i=\rnge{1}{n}$.
By \rf{prelim2}, $\hoder{x_\jone}{ \omega_2} <  d_\jone - c_i$ for all $i\in\eqsetIES$.
So these expression substitutions do not introduce $x_\jtwo^{(d_\jtwo-c_i)}$ in $\newf_i$, where $\jtwo\in\Jnol$. Given $M$ in \rf{ESdef}, we have $d_j-c_i>\sij{i}{j}$ for all $i\notin\eqsetIES$ and $j\in\nzsetES$. Hence
\begin{align}\label{eq:Sig11}
\hoder{x_\jone}{\newf_i} < d_\jone-c_i
\qquad \text{for $\jone\in\Jnol, \ i=\rnge{1}{n}$}\;.  
\end{align}

What remains to show is the case $j=\indxk$.
From \rf{gjgk},
\begin{equation*}
x_r^{(d_r-\ESconst)}=y_r+ \frac{v_\jtwo}{v_\indxk} \cdot x_\indxk^{(d_\indxk-\ESconst)}\;.
\end{equation*}
Taking the partial derivatives of both sides with respect to $x_\indxk^{(d_\indxk-\ESconst)}$ and applying Griewank's Lemma \rf{griewank} with $w=x_{\jtwo}^{(d_{\jtwo}-\ESconst)}$ and $q=\ESconst-c_i\ge 0$ for all $i\in \eqsetIES$, we have
\begin{align}
\frac{v_{\jtwo}}{v_\indxk}
= \pp{x_{\jtwo}^{(d_{\jtwo}-\ESconst)}}{x_\indxk^{(d_\indxk-\ESconst)}}
= \pp{x_{\jtwo}^{(d_{\jtwo}-\ESconst+\ESconst-c_i) }}{x_\indxk^{(d_\indxk-\ESconst+\ESconst-c_i)}}
= \pp{x_{\jtwo}^{(d_{\jtwo}-c_i)}}{x_\indxk^{(d_\indxk-c_i)}}\;.
\label{eq:ujuk}
\end{align}
Then
\begin{equation*}
\begin{alignedat}{3}
\pp{\newf_i}{x_\indxk^{(d_\indxk-c_i)}} 
&=  \pp{f_i}{x_\indxk^{(d_\indxk-c_i)}} + \sum_{r\in \nzsetES\setminus\{\indxk\}}
\pp{f_i}{x_r^{(d_r-c_i)}} \cdot 
\pp{x_r^{(d_r-c_i)}}{x_\indxk^{(d_\indxk-c_i)}}
&\qquad&\text{by the chain rule}  \nonumber\\
&= \Jij{i}{\indxk} + \sum_{r\in \nzsetES\setminus\{\indxk\}}
\Jij{i}{r} \cdot \frac{v_r}{v_\indxk} 
\nonumber &&\text{by \rf{ujuk}} \nonumber\\
&= \frac{1}{v_\indxk} \sum_{r\in \nzsetES}  \Jij{i}{r} v_r= \frac{1}{v_\indxk}(\~J \vecv)_i = 0 &&\text{by $\~J \vecv=\~0$}\;.
\end{alignedat}
\end{equation*}
This gives $\hoder{x_\indxk}{\newf_i} < d_\indxk-c_i$ for all $i=\rnge{1}{n}$.
Together with \rf{Sig11} we have proved the ``$<$'' part in $\newSig_{11}$.

\item $\newSigBlk{1}{2}$. The substitutions do not affect $x_j$, for all $j\notin\nzset$. By \rf{prelim2}, such an $x_j$ occurs in every $ \omega_2$ of order $\le d_j-c_i$, where $i\in\eqsetIES$. Hence also
\[
\hoder{x_j}{\newf_i} \le d_j-c_i \quad\text{for all $i=\rnge{1}{n}$ and $j\notin\nzset$}\;.
\]

\item $\newSigBlk{1}{3}$. Consider $\jtwo\in\Jnol$. For an $i\in\eqsetIES$, $y_\jtwo$ occurs of order $\ESconst-c_i$ in $\omega_2$ in \rf{omega2}. For all $i=\oneton$, if a substitution occurs for an $x_\jtwo^{(d_\jtwo-c_i)}$ in $f_i$, then $\hoder{y_\jtwo}{\newf_i}=\ESconst-c_i$; otherwise $\hoder{y_\jtwo}{\newf_i}=-\infty$. In either case $\hoder{y_\jtwo}{\newf_i}\le \ESconst-c_i$.

\item $\newSigBlk{2}{1}$. Equalities hold on the diagonal and in the $\ssth{\indxk}$ column, as $y_\jtwo^{(d_\jtwo-\ESconst)}$ and $y_\indxk^{(d_\indxk-\ESconst)}$ occur in $g_\indxk$, where $\jtwo\in\nzsetES$. What remains to show is the ``$<$'' part. Assume that $\jone\,,\jtwo\,,\indxk\in\nzsetES$ are distinct. Then by~\rf{gjgk}~and~\rf{EScond},
\begin{align}\label{eq:Sig21}
\hoder{x_{\jone}}{g_{\jtwo}} 
&=\hoder{x_{\jone}}{ y_{\jtwo}-x_{\jtwo}^{(d_{\jtwo}-\ESconst)}
+ \frac{v_{\jtwo}}{v_\indxk} \cdot x_\indxk^{(d_\indxk-\ESconst)}} \le \hoder{x_{\jone}}{\vecv} < d_\jone - \ESconst\;.
\end{align}

\item $\newSigBlk{2}{2}$. Assume again that $\jone\,,\jtwo\,,\indxk$ are distinct, where $\jtwo\in\nzsetES$ and $j=\rnge{s+1}{n}$. Then replacing the ``$<$'' in \rf{Sig21} by ``$\le$'' proves the ``$\le$'' part in $\newSigBlk{2}{2}$.

\item $\newSigBlk{2}{3}$. Consider $\jtwo, \jone\in\nzsetES$.
By  $0=g_\indxk = -y_\indxk + x_\indxk^{(d_\indxk-\ESconst)}$ and \rf{gjgk}, $y_\jone$ occurs in $g_\jtwo$ only if $\jone=\jtwo$, and $\hoder{y_\jone}{g_\jone}=0$. Hence, on the diagonal lie zeros, and everywhere else is filled with $-\infty$.
\end{enumerate}

Also worth noting is that in the $y_\indxk$ column is only one finite entry $\sigma_{n+\indxk,n+\indxk}=0$, and that in the $g_\indxk$ row are only two finite entries $\sigma_{n+\indxk,n+\indxk}=0$ and $\sigma_{n+\indxk,\indxk}=d_\indxk-\ESconst$.

Recalling \rf{newcdES} for the formulas of $\newc_i$ and $\newd_j$ of $\newSig$, we can summarize that the above items (1)--(6) verify the relations between $\newsij{i}{j}$ and $\newd_j-\newc_i$ in $\newSigma$ for all $i,j=\rnge{1}{n+s}$; see \FGref{ESnewSig}. \qed

\end{proof} 
}

\begin{acknowledgements}
The authors acknowledge with thanks the financial support for this research: GT is supported in part by the McMaster Centre for Software Certification through the Ontario Research Fund, Canada, NSN is supported in part by the Natural Sciences and Engineering Research Council of Canada, and JDP is supported in part by the Leverhulme Trust, the UK. 
\end{acknowledgements}

\end{document}